\newtheorem{thm}{Theorem}[section]
\newtheorem{lemma}[thm]{Lemma}
\newtheorem{definition}[thm]{Definition}
\newtheorem{corollary}[thm]{Corollary}
\newtheorem{remark}[thm]{Remark}
\newtheorem{proposition}[thm]{Proposition}
\numberwithin{equation}{section}
\newcommand\Ptw{\hbox{P}_{\scriptsize
       \hbox{II}}}
\newcommand\Pth{\hbox{P}_{\scriptsize
       \hbox{34}}}
\begin{document}

\title[Asymptotics of $\Ptw$]{Global Asymptotics of the Second Painlev\'e equation in Okamoto's space}
\author{P. Howes}\thanks{This research was supported by an Australian Postgraduate Award. }
\address{School of Mathematics and Statistics F07, The University of Sydney, NSW 2006, Australia\\ Fax: +61 2 9351 4534}
\email{howespt@gmail.com}
\author{N. Joshi}\thanks{This research was supported by the Australian Research Council grant \# DP110102001. }
\address{School of Mathematics and Statistics F07, The University of Sydney, NSW 2006, Australia\\ Tel: +61 2 9351 2172\\ Fax: +61 2 9351 4534}
\email{nalini.joshi@sydney.edu.au}
\begin{abstract}
We study the solutions of the second Painlev\'e equation ($\Ptw$) in the space of initial conditions first constructed by Okamoto, in the limit as the independent variable, $x$, goes to infinity. Simultaneously, we study solutions of the related equation known as the thirty-fourth Painlev\'e equation ($\Pth$). By considering degenerate cases of the autonomous flow, we recover the known special solutions, which are either rational functions or expressible in terms of Airy functions. We show that the solutions that do not vanish at infinity possess an infinite number of poles. An essential element of our construction is the proof that the union of exceptional lines is a repellor for the dynamics in Okamoto's space. Moreover, we show that the limit set of the solutions exists and is compact and connected.
\end{abstract}
\keywords{The second Painlev\'e equation, thirty-fourth Painlev\'e equation, asymptotic behaviour, resolution of singularities, rational surface}
\subjclass[2000]{34M55; 34E05,34M55, 34M30,14E15}

\date{}
\maketitle

\section{Introduction}
The second Painlev\'e equation 
\begin{align}\label{painleve2}
\frac{d^2y}{dx^2}&=2y^3+xy+\alpha,
\end{align}
is well known from its application in random matrix theory \cite{tw:94,m:04} amongst many others. For integer values of $\alpha$, it is known that there exist rational solutions, which are expressed as the logarithmic derivative of ratios of successive Yablonskii-Vorob'ev polynomials.  In this paper, we study the asymptotic behaviour of all solutions in complex projective space of dimension two for $x\in{\mathbb C}$, with $|x|\to\infty$. Throughout the paper we assume that $\alpha$ is fixed and bounded.

The asymptotic study of $\Ptw$ was initated by Boutroux \cite{bou:14} and has more recently been studied in \cite{joshi:88,joshi:92,kit:94,dei:95,fok:06}. Boutroux provided a change of variables in which the asymptotic behaviours become explicit and studied the equation directly, whilst more recent approaches have centred around the Riemann-Hilbert method. This paper uses the explicit construction of the blown up space of initial conditions in order to give a description of the solution space as the modulus of the independent variable approaches infinity.

It is known  through Okamoto's \cite{oka:79} compactification of the space of initial conditions that the solution space is connected. However, until the work of Duistermaat and Joshi \cite{djo:10} in the case of ${\hbox{P}_{\scriptsize\hbox{I}}}$, no completeness and connectedness study of the asymptotic behaviours had been carried out for any of the Painlev\'e equations to our knowledge. In this paper we fill this gap for the second and thirty-fourth Painlev\'e equations.

The main result of this paper is to show compactness and connectedness of the limit sets of solutions to the second and thirty-fourth \cite{inc:56} Painlev\'e equations as the independent variable approaches infinity. A more precise statement is made in Section \ref{statements}. The proof of this statement relies on the construction of a function $d$, which is used to measure the distance of a solution of (\ref{painleve2}) to a set of exceptional lines created in the process of blowing up (\ref{painleve2}) in $\mathbb{P}^2(\mathbb{C})$. As a corollary, we find that all solutions to $\Ptw$ which are not uniformly asymptotically zero must necessarily have infinitely many poles.

The paper is organised as follows: In Section \ref{setup}, we provide a change of variables to make more explicit the asymptotic behaviour of (\ref{painleve2}), as well as provide some definitions and explain the terminology to be used in subsequent sections. 
In Section \ref{statements}, we introduce the main result, as well as provide some corollaries to this result.
In Section \ref{specsols}, we show how the special solutions to (\ref{painleve2}) arise naturally by considering the degenerate values of the autonomous energy function described in Section \ref{setup}. Furthermore, we will make some remarks about the resolution of singularities for nonlinear systems, and their dependence on choices of coordinates and asymptotic limits. The blow up (resolution of singularities) for the system (\ref{boutrouxsys})-(\ref{boutrouxsys2}) is carried out in Appendix \ref{app1}, where are the details are given explicitly as the proofs of the statements in Section \ref{statements} require precise asymptotic estimates. In Section \ref{conc} we make some concluding remarks.


\section{Hamiltonian System and Definitions}\label{setup}

The second Painlev\'e equation (\ref{painleve2}) has a Lagrangian $L$ given by

\begin{equation}
\label{p2}
L(Q,\dot{Q},x)=\dfrac{\dot{Q}^2}{2}+\dfrac{Q^4}{2}+x\dfrac{Q^2}{2}+\alpha Q,
\end{equation}
which with the Euler-Lagrange equations implies (\ref{painleve2}) for $Q$. Performing a standard Legendre transformation leads to an associated Hamiltonian
\begin{equation}
\label{preham}
H(Q,P,x)=\dfrac{P^2}{2}-\dfrac{Q^4}{2}-x\dfrac{Q^2}{2}-\alpha Q.
\end{equation}
Whilst this Hamiltonian system implies (\ref{painleve2}) for Q, we make the canonical change of variables with type 3 generating function, $G_3=-pQ+Q^3/3+xQ/2$:
\begin{subequations}
\begin{align}
\label{half}
   P &= p-(Q^2+x/2),    \\
   q &= Q.
\end{align}
\end{subequations}
Under this change the Hamiltonian becomes

\begin{equation}\label{hamiltonian1}
H(q,p;x)=\dfrac{p^2}{2}-(q^2+x/2)p-(\alpha+1/2)q,
\end{equation}
which with Hamilton's equations implies the second Painlev\'e equation (\ref{painleve2}) for $q$ \textit{and} the thirty-fourth Painlev\'e equation for $p$:
\begin{equation}
\label{p34}
\dfrac{d^2p}{dx^2}=\dfrac{1}{2p}\left(\dfrac{dp}{dx}\right)^2+2p^2-xp-\dfrac{\beta}{2p},
\end{equation}
where $\beta=(\alpha-1/2)^2$.

Thus from this point forward, relevant statements made in the analysis of $\Ptw$ hold analogously for $\Pth$. Further benefits of choosing this coordinatisation will be discussed in Subsection \ref{coords}.

\subsection{Boutroux Scaling}
We are interested in studying the limit $x\rightarrow \infty$, $x\in \mathbb{C}$ and so we perform another change of variables (\`a la Boutroux \cite{bou:14}) to make the leading order asymptotic behaviour explicit. The Hamiltonian (\ref{hamiltonian1}) is almost weighted homogeneous, which inspires the following change of variables:
 \begin{eqnarray*}
q & = & \lambda u, \\
p & = & \lambda^2 v,\\
x &=& \lambda^2 \zeta ,
\end{eqnarray*}
and then 
\begin{equation}
H=\lambda^4\left(\dfrac{v^2}{2}-(u^2v+\dfrac{\zeta v}{2}-\dfrac{(\alpha+1/2)u}{\lambda^3})\right).
\end{equation}
$\zeta=1$ if and only if $\lambda^2=x$. Then if $x=x(z)$ and $\dot{}$ is differentiation with respect to $z$, then 
 
\begin{displaymath}
 \dot{u}=\dot{x}\left((v-u^2-1/2)x^{1/2}-\dfrac{u}{2x}\right),
\end{displaymath}
and so if we choose $\dot{x}x^{1/2}=1$, then we have $z=2/3x^{3/2}$. We then have the system:
 \begin{eqnarray}\label{boutrouxsys}
\dot{u} & = & v-u^2-1/2-\dfrac{u}{3z},  \\ \label{boutrouxsys2}
\dot{v} & = & 2uv+\dfrac{2\alpha+1}{3z}-\dfrac{2v}{3z} .
\end{eqnarray}
This pair of equations implies the Boutroux forms of $\Ptw$ and $\Pth$ for $u$ and $v$ respectively:
\begin{subequations}
\begin{align}
\label{bp2}
 \ddot{u}&=2u^3+u+\dfrac{2\alpha}{3z}-\dot{u}\dfrac{1}{z}+\dfrac{u}{9z^2} \\ 
  \ddot{v}&=\dfrac{\dot{v}^2}{2v}+2v^2-v-\dfrac{\dot{v}}{z}-\dfrac{(2\alpha-1)^2}{18vz^2}+\dfrac{1}{4z^2}.  \label{bp32}  
\end{align}
\end{subequations}
The Boutroux system (\ref{boutrouxsys})-(\ref{boutrouxsys2}) is an order $z^{-1}$ perturbation of the autonomous system with time independent Hamiltonian given by
\begin{equation}\label{energyfct}
E:=v^2/2-u^2v-v/2,
\end{equation}
where 

\begin{equation}
\dot{E}:=\dfrac{-4E}{3z}+\dfrac{4\alpha v-(2\alpha+1)(2u^2+1)}{6z}\,\,.
\end{equation}

\begin{remark}
The second Painlev\'e equation has the B\"acklund transformations

\begin{equation}
\label{bt1}
\mathcal{T}^{{\pm}}:\enskip y(x;\alpha\pm 1)=-y-\dfrac{2\alpha\pm 1}{2y^{2}\pm 2y^{{\prime}}+x}.
\end{equation}
Under the Boutroux change of variables, the B\"acklund transformation becomes
\begin{equation}
\label{bt2}
\mathcal{T}^{{\pm}}:\enskip u(z;\alpha\pm 1)=-u-\dfrac{2\alpha\pm 1}{3zu^{2}\pm u\pm 3z\dot{u}+3z/2}.
\end{equation}
Upon elimination of the derivative term in (\ref{bt1}), one finds the equation referred to as $alt-dP_I$. In the Boutroux setting, the equation becomes a Boutroux form of $alt-dP_I$:

\begin{equation}
\label{baltdp1}
\dfrac{2\alpha+1}{\bar{u}+u}+\dfrac{2\alpha-1}{\underbar{u}+u}+6zu^2+3z=0,
\end{equation}
where $u=u(z,\alpha)$ and $\bar{u}=u(z,\alpha+1)$, $\underbar{u}=u(z,\alpha-1)$. This is an appropriate scaling for the large $z$ limit for $alt-dP_I$.

\end{remark}

\subsection{Notation and Definitions}
The natural setting to study the Painlev\'e equations is in complex projective space, where the poles become zeroes in the coordinate charts near infinity in the affine plane. Following the pioneering work of Okamoto \cite{oka:79}, we study the Painlev\'e system in $\mathbb{P}^2(\mathbb{C})$. 
We embed the Boutroux-Painlev\'e system in $\mathbb{P}^2(\mathbb{C})$ and identify the affine coordinates with homogeneous coordinates as
\begin{eqnarray*}
[1:u:v] &= &[u^{-1}:1:vu^{-1}]=[u_{01}:1:v_{01}]\\
&=&[v^{-1}:uv^{-1}:1]=[v_{02}:u_{02}:1]\,.
\end{eqnarray*}

It is known from \cite{oka:79} that every continuous Painlev\'e equation can be regularised (made free from the indeterminacy of the flow through base points) by blowing up $\mathbb{P}^2(\mathbb{C})$ at 9 points. Indeed it is known from Sakai's classification \cite{sak:01} that all (continuous and discrete) Painlev\'e equations can be regularised by a 9-point blow up of $\mathbb{P}^2(\mathbb{C})$.

We denote the line at infinity by $L_0$. Note that it is given by $u_{01}=0$ or $v_{02}=0$. For $0\leq i\leq 8$ corresponding to the $i$-th stage of the blowing up process, we denote the $i$-th base point $b_i$ and the exceptional line attached to the base point by $L_{i+1}$ and the coordinates of the two charts coordinatising $L_{i+1}$ by $(u_{ij},v_{ij})$, $j=1,2$. 

Moreover, we denote the proper transform of $L_i$ after the ninth blow up as $L_i^{(9-i)}$ and the $z$-dependent set of lines where the vector field becomes infinite by $I(z):=\bigcup_{i=0, i\neq 6}^{8} L_{i}^{(9-i)}(z)$. This set will be referred to as \textit{the infinity set}. The final space obtained by regularising the Boutroux-Painlev\'e system is denoted $S_9(z)$, but we will often drop the explicit $z$ dependence for ease of notation. 

In the proofs of the results, we regularly make use of the Jacobian of the coordinate transformation from $(u,v)$ to the new coordinate system, given by 
\begin{displaymath}
w_{ij} = \dfrac{\partial u_{ij}}{\partial u}\dfrac{\partial v_{ij}}{\partial v}-\dfrac{\partial u_{ij}}{\partial v}\dfrac{\partial v_{ij}}{\partial u} .
\end{displaymath}

\section{Statement of results} \label{statements}
 
In this section we will state the main result: that the limit set of the solutions of the second and thirty-fourth Painlev\'e equations forms a non-empty, compact and connected subset of $S_9(\infty) \backslash I(\infty)$, which is the space $S_9(z) \backslash I(z)$ obtained by replacing all $1/z$ terms by $0$. The result is a consequence of Theorem \ref{theorepel}, which will be proved in Section \ref{proof}. In this theorem, $d(z)$ denotes a distance measure to the infinity set $I(z)$, that is, $d(z)=0$ if and only if the solution to the Painlev\'e system crosses $I(z)$. It is shown in Lemma \ref{lemma2} that a continuous such distance measure exists. The corollaries which follow from Theorem \ref{theorepel} are proved in this section. 

\begin{thm}\label{theorepel}

Let $\epsilon_1,\epsilon_2,\epsilon_3$ be given such that $\epsilon_1>0$, $0<\epsilon_2<4/3$, $0<\epsilon_3<1$. Then there exists $\delta\in\mathbb{R}_{>0}$ such that if $|z_0|>\epsilon_1$, and $|d(z_0)|<\delta$, then

\begin{displaymath}
 \rho = \sup \left\{r>|z_0|\,\, \mbox{such that}\,\, |d(z)|<\delta\,\, \mbox{whenever}\, |z_0|\leq |z|\leq r\right\}
\end{displaymath} satisfies

\begin{description}
  \item[\textrm{(i)}]
  \begin{displaymath}
\delta \geq |d(z_0)| \left(\frac{\rho}{z_0}\right)^{4/3-\epsilon_2}(1-\epsilon_3)\,.
\end{displaymath}
  \item[\textrm{(ii)}] If $|z_0|\leq |z| \leq \rho$, then 
  \begin{displaymath}
d(z)=d(z_0)\left(\frac{z}{z_0}\right)^{4/3+\epsilon_2(z)}(1+\epsilon_3(z))\, .
\end{displaymath}
  \item[\textrm{(iii)}] If $|z|\geq \rho$ then
  \begin{displaymath}
|d(z)|\geq \delta (1-\epsilon_3) \, .
\end{displaymath}
\end{description}

\end{thm}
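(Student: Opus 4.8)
The plan is to reduce the three statements to a single scalar differential relation for $d$ along solutions and then to integrate it. Since $d$ is, up to the construction in Lemma \ref{lemma2}, (the modulus of) a local holomorphic coordinate whose zero locus is the infinity set $I(z)$, the core of the argument is to compute the logarithmic derivative $\dot d/d$ in a neighborhood of $I(z)$ and to isolate its leading term. I would organise a neighborhood of $I(z)$ into the finitely many coordinate charts produced by the resolution in Appendix \ref{app1}, and in each chart express $d$ through the coordinate transverse to the relevant exceptional line $L_i^{(9-i)}(z)$, working with the complex lift of $d$ so that the (possibly complex) equality in part (ii) and the modulus bounds in (i), (iii) are treated uniformly.

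The heart of the matter is an identity supplied by the autonomous energy (\ref{energyfct}). Because $E$ is a polynomial in $(u,v)$ that tends to infinity precisely as the solution approaches $I(z)$, the distance $d$ is comparable to $1/E$ there, so that $\dot d/d = -\dot E/E$ at leading order. Substituting the evolution of $E$ recorded after (\ref{energyfct}) gives
\begin{equation*}
\frac{\dot d}{d} = \frac{4}{3z} - \frac{4\alpha v - (2\alpha+1)(2u^2+1)}{6\,z\,E} + R,
\end{equation*}
where $R$ collects the nonlinear corrections from comparing $d$ with $1/E$ and the chart-transition terms. The key estimate is that both the explicit middle term and $R$ are $o(1/z)$ uniformly as $d\to0$ (equivalently $E\to\infty$): in the chart where $E\sim v^2/2$ the middle term is $O\!\left(1/(z\,|E|^{1/2})\right)$, with analogous chart-dependent bounds elsewhere, while $R=O(d)+O(1/z)$ follows from the explicit asymptotics of the appendix. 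Shrinking $\delta$ and using $|z|>\epsilon_1$ then forces the total correction below $\epsilon_2/|z|$ in modulus throughout $\{|d|<\delta\}$.

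Writing $\tfrac{d}{dz}\log d=(4/3+\eta(z))/z$ with $|\eta|\le\epsilon_2$ and integrating along the solution from $z_0$ to $z$, valid as long as $|z_0|\le|z|\le\rho$, yields part (ii): the accumulated error $\int\eta/z'\,dz'$ is absorbed into the exponent correction $\epsilon_2(z)$ (which inherits the bound $|\epsilon_2(z)|\le\epsilon_2$) and the residual factor $1+\epsilon_3(z)$, with $|\epsilon_3(z)|\le\epsilon_3$ secured by further shrinking $\delta$. Part (ii) also forces $\rho<\infty$, since otherwise $|d|$ would grow like $|z|^{4/3}$ and exceed $\delta$, so the solution must exit the $\delta$-neighborhood. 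Part (i) then follows by letting $|z|\to\rho$: continuity of $d$ and the definition of $\rho$ give $|d|=\delta$ at $|z|=\rho$, and inserting the worst-case (lower) exponent and factor from (ii) produces the stated lower bound on $\delta$. For part (iii) I would argue by contradiction: if $|d|$ fell below $\delta(1-\epsilon_3)$ for some $|z|\ge\rho$, the solution would re-enter the $\delta$-neighborhood, where (ii) applies afresh and shows $|d|$ strictly increasing in $|z|$, contradicting the descent; a first-exit/first-return analysis makes this precise.

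The main obstacle is establishing the leading coefficient $4/3$ together with the uniform smallness of the correction \emph{simultaneously} across all the charts covering $I(z)$. The resolution replaces the single line at infinity by a chain of exceptional lines, each demanding its own coordinate expression for $d$ and its own verification that $\dot d/d$ has the same leading term; near the corners where two exceptional lines meet, the comparison $d\sim1/E$ degenerates and $R$ must be bounded by hand using the appendix estimates. A further subtlety is that $I(z)$ itself moves with $z$, since the base points are $z$-dependent, so $\dot d$ acquires an explicit term from the motion of the divisor that must be shown to be of the same $O(1/z)$ order and absorbed into $\eta$.
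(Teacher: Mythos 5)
Your proposal hinges on a single uniform pointwise estimate, namely that $\dot d/d = \tfrac{4}{3z} + o(1/z)$ (equivalently $|\eta|\le\epsilon_2$) throughout a full neighbourhood of $I(z)$, obtained from $d\sim 1/E$ and the smallness of the correction term in $\dot E/E$. That estimate is false on part of the infinity set, and this is a genuine gap rather than a technicality. Near $L_4^{(5)}$ one has $u\sim v_{41}^{-1}$, $v\sim 2v_{41}^{-2}$, $E\sim -(4u_{41}+1)v_{41}^{-2}$, and the ``middle term'' you propose to discard tends to $(1-2\alpha)/\bigl(3z(4u_{41}+1)\bigr)$, which is $O(1/z)$ --- the same order as the leading term $4/(3z)$, not $o(1/z)$. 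This is exactly why Lemma \ref{lemma1} of the paper only asserts \emph{boundedness} of $3z\dot E/(4E)$ on compact subsets of $L_4^{(5)}\setminus L_5^{(4)}$ and $L_7^{(2)}\setminus L_8^{(1)}$, while the clean estimate $\dot E/E + 4/(3z)=o(1/z)$ is proved only on $I_3=\bigcup_{i=0}^{3}L_i^{(9-i)}$. Moreover the comparison $d\sim 1/E$ itself breaks down near $L_5^{(4)}$ and $L_8^{(1)}$ (where $E$ has base points) and near the pole lines (where $E$ passes through a pole at $z=\zeta$ while $d=w_{62}$ or $w_{91}$ stays small and finite), so no shrinking of $\delta$ rescues a uniform differential inequality for $\log d$. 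You flag these regions as ``obstacles'' to be ``bounded by hand,'' but the bound you would need does not exist pointwise.

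The paper's proof supplies the missing mechanism, and it is structurally different from a uniform Gronwall-type integration: Lemmas \ref{lemma2} and \ref{lemma3} show that the excursions away from $I_3$ (through $L_4^{(5)}$, $L_5^{(4)}$ and across the pole lines, and their counterparts in the $7$th--$9$th charts) are confined to small approximate discs in the $z$-plane, of radii $\sim|\delta|R_5$ and $\sim|\delta R_4|^{1/2}$ centred at the pole points, on which $d$ is shown directly (via $\dot w_{62}/w_{62}=4/(3z)+O(w_{62})$ and the relation $Ew_{62}\to1$ at the edges) to satisfy $d(z)/\delta\sim1$; the $O(1/z)$ discrepancy is harmless only because it is integrated over these short arcs. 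One then deforms the radial path from $z_0$ to $z_1$ into a path $\gamma$ avoiding the discs, integrates the clean $I_3$-estimate along $\gamma$, and stitches the pieces together using the approximate constancy of $d$ across the discs. Without this two-tier (pointwise near $I_3$, integrated over small discs elsewhere) structure and the path-deformation step, parts (i)--(iii) do not follow from your differential relation. Your treatment of (i) and (iii) given (ii) is essentially the paper's, but (ii) itself is not established by your argument.
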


\begin{remark} Note that in case (iii) of Theorem \ref{theorepel}, we have $d(z)>0$ for all complex times $z$, $|z|\geq|z_0|$ and so the solution never reaches the infinity set.
\end{remark}

The following is a definition of a limit set for complex dynamical systems.

\begin{definition}
For every solution $\mathbb{C} \backslash \{0\} \ni z \mapsto U(z) \in S_9(z) \backslash I(z)$, let $\Omega_U$ denote the set of all $s \in S_9(\infty) \backslash I(\infty)$ such that there exists a sequence $z_j \in \mathbb{C}$ with the property that $z_j \rightarrow \infty$ and $U(z_j) \rightarrow s$ as $j \rightarrow \infty$. The subset $\Omega_U$ of $S_9(\infty) \backslash I(\infty)$ is called the limit set of the solution $U$.
\end{definition}

\begin{thm}\label{corlimitset}

There exists a compact subset $K$ of $S_9(\infty) \backslash I(\infty)$ such that for every solution $U$ the limit set $\Omega_U$ is contained in K. The limit set $\Omega_U$ is a non-empty, compact and connected subset of K, invariant under the flow of the autonomous Hamiltonian system on $S_9(\infty) \backslash I(\infty)$. For every neighbourhood A of $\Omega_U$ in $S_9$ there exists an $r>0$ such that $U(z)\in A$ for every $z\in\mathbb{C}$ such that $|z|>r$. If $z_j$ is any sequence in $\mathbb{C}\backslash \{0\}$ such that $z_j \rightarrow \infty$ as $j\rightarrow \infty$, then there is a subsequence $j=j(k)\rightarrow\infty$ as $k\rightarrow \infty$ and an $s\in \Omega_U$ such that $U(z_{j(k)})\rightarrow s$ as $k\rightarrow \infty$.
\begin{proof}

See Corollary 4.6 in \cite{djo:10}, whose method of proof also applies here.

\end{proof}

\end{thm}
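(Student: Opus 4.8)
The plan is to follow the strategy of Duistermaat and Joshi for the first Painlev\'e equation \cite{djo:10}, using Theorem \ref{theorepel} as the essential input, and to adapt the three nontrivial points to the present coordinates. Throughout I identify the fibres $S_9(z)$ with the limit surface $S_9(\infty)$ via the blow-up charts, which converge as $|z|\to\infty$ upon replacing all $1/z$ terms by $0$; under this identification the $z$-dependent distance $d(z)$ of Lemma \ref{lemma2} converges to a continuous function $d_\infty$ on $S_9(\infty)$ measuring distance to $I(\infty)$. To build $K$, fix $\epsilon_1,\epsilon_2,\epsilon_3$ and the resulting $\delta>0$ of Theorem \ref{theorepel} and set $c:=\delta(1-\epsilon_3)>0$. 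Every solution eventually satisfies $d(z)\ge c$: if $|d(z_0)|\ge\delta$ at some $|z_0|>\epsilon_1$ this is clear, while if $|d(z_0)|<\delta$ then, since a solution never lies on $I(z)$ we have $d(z_0)\neq0$, so part (i) forces the associated $\rho$ to be finite and part (iii) gives $|d(z)|\ge c$ for all $|z|\ge\rho$. As $c$ depends only on the universal choices $\delta,\epsilon_3$, the set $K:=\{s\in S_9(\infty):d_\infty(s)\ge c\}$ is a fixed closed, hence compact, subset of $S_9(\infty)\setminus I(\infty)$, and $U(z)\in K$ for all large $|z|$; in particular $\Omega_U\subseteq K$.

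Non-emptiness, compactness, and the two convergence statements are then routine. Writing $\Omega_U=\bigcap_{r}\overline{U(\{z:|z|>r\})}$ exhibits $\Omega_U$ as a nested intersection of nonempty closed subsets of the compact set $K$, so it is nonempty and compact. The attracting property follows by contradiction: if a neighbourhood $A\supseteq\Omega_U$ failed to absorb $U$, there would be $z_j\to\infty$ with $U(z_j)\in K\setminus A$, and compactness of $K\setminus A$ would produce a limit point in $\Omega_U\setminus A=\emptyset$. The final subsequence assertion is immediate from the definition of $\Omega_U$ together with the compactness of $K$.

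For connectedness I would use that $\{z\in\mathbb{C}:|z|>r\}$ is connected and that $z\mapsto U(z)$ is continuous on $\mathbb{C}\setminus\{0\}$, the movable singularities of $\Ptw$ and $\Pth$ having been resolved in $S_9(z)$. Hence each $U(\{z:|z|>r\})$ and its closure are connected, and a nested intersection of compact connected sets in a Hausdorff space is connected; therefore $\Omega_U$ is connected. This step is in fact cleaner than in the classical real-time theory, since complements of disks in $\mathbb{C}$ are connected.

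The genuine obstacle is invariance of $\Omega_U$ under the autonomous flow. Here the Boutroux system (\ref{boutrouxsys})--(\ref{boutrouxsys2}) must be treated as an \emph{asymptotically autonomous} perturbation of the Hamiltonian flow of $E$ in (\ref{energyfct}): the perturbing terms are $O(1/z)$, and on the compact set $K$ the autonomous vector field is regular, so its flow $\phi_t$ is defined for $t$ in a fixed interval with uniform continuous dependence on initial data. Given $s\in\Omega_U$ with $U(z_j)\to s$, I would continue the solution from $z_j$ through a bounded increment of rescaled time and compare it with the autonomous orbit $t\mapsto\phi_t(s)$ by a Gronwall estimate; since the $1/z$ terms are uniformly small for $|z_j|$ large and Theorem \ref{theorepel} prevents escape toward $I(z)$ during the increment, the two trajectories stay within $o(1)$ of one another, forcing $\phi_t(s)\in\Omega_U$. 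Making this rigorous --- keeping the perturbed trajectory inside the moving charts while parametrising it by the autonomous time $t$, and controlling the drift uniformly in $j$ --- is the technical heart of the proof and the step I expect to demand the most care.
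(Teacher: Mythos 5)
Your proposal is correct and follows essentially the same route as the paper, which simply cites Corollary 4.6 of Duistermaat--Joshi: Theorem \ref{theorepel} yields a compact trapping region $K$ bounded away from $I(\infty)$, the limit set is obtained as a nested intersection of compact connected sets, and invariance follows from treating the Boutroux system as an asymptotically autonomous $O(1/z)$ perturbation of the flow of $E$. The only cosmetic point is that $d$ is defined only on a neighbourhood of $I$, so $K$ should be taken as the complement of $\{|d|<c\}$ in $S_9(\infty)$ rather than as $\{d_\infty\ge c\}$; this does not affect the argument.
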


\begin{lemma}\label{invariance}

$\Omega_U$ is invariant under the transformation $(E,u,v)\mapsto (E,-u,v)$.

\begin{proof}

It is known that the solutions $y(x)$ to the second Painlev\'e equation are single valued in the complex plane. The transformation to Boutroux coordinates $y(x)=x^{1/2}u(z)$, $z=2x^{3/2}/3$ is singular at $z=0$ and correspondingly $x=0$. They introduce multivaluedness of the solutions $u(z)$ when $z$ runs around the origin. The single valuedness of $y(x)$ and the relation $u(z)=(3z/2)^{-1/3} y((3z/2)^{2/3})$ implies that the analytic continuation of $u(z)$ along the path $z e^{i\theta}$ becomes $-u(z)$ when $\theta$ runs from 0 to $3\pi$. That is 

\begin{align}
&u(ze^{3\pi i})=-u(z),
\end{align}
and from \eqref{boutrouxsys} we have
\begin{align}
&v(ze^{3\pi i}) =v(z).
\end{align}
Recall also from \eqref{energyfct} that $E=v^2/2-u^2v-v/2$.  The lemma follows from these results.

\end{proof}

\end{lemma}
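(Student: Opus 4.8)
The plan is to produce the involution $(E,u,v)\mapsto(E,-u,v)$ as a genuine monodromy of the Boutroux solution under analytic continuation of $z$ once around the origin, and then to transport this symmetry to the limit set by a continuity argument. The single essential input is the Painlev\'e property: every solution $y(x)$ of \eqref{painleve2} is single-valued (meromorphic) on all of $\mathbb{C}$.

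First I would track the branches. Since $z=\tfrac23 x^{3/2}$ one has $x=(3z/2)^{2/3}$ and $u(z)=(3z/2)^{-1/3}y\big((3z/2)^{2/3}\big)$. Continuing $z$ along $z e^{i\theta}$ from $\theta=0$ to $\theta=3\pi$ sends $x$ through $x e^{2\pi i}=x$, so $x$ returns to its starting value and, by single-valuedness, $y(x)$ is unchanged; meanwhile the prefactor $(3z/2)^{-1/3}$ acquires the factor $e^{-\pi i}=-1$. Hence $u(z e^{3\pi i})=-u(z)$, which is the crux of the argument.

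Next I would read off the behaviour of $v$ from \eqref{boutrouxsys}, namely $v=\dot u+u^2+\tfrac12+\tfrac{u}{3z}$. Differentiating the identity $u(ze^{3\pi i})=-u(z)$ in $z$ and using $e^{3\pi i}=-1$ gives $\dot u(ze^{3\pi i})=\dot u(z)$; substituting, and noting that $u^2$ is unchanged while $u/(3z)$ is preserved because the sign change of $u$ is cancelled by that of the argument $z e^{3\pi i}=-z$, each term is individually fixed and I obtain $v(ze^{3\pi i})=v(z)$. From \eqref{energyfct} the energy $E=v^2/2-u^2v-v/2$ is visibly invariant under $(u,v)\mapsto(-u,v)$. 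Thus continuing the solution once around the origin realises precisely the map $M:(E,u,v)\mapsto(E,-u,v)$.

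Finally I would pass to the limit set. Given $s\in\Omega_U$ choose $z_j\to\infty$ with $U(z_j)\to s$ and set $z_j'=z_j e^{3\pi i}$, so that $z_j'\to\infty$ as well and $U(z_j')=M\big(U(z_j)\big)$ by the monodromy above. If $M$ is a continuous involution of $S_9(\infty)$, then $U(z_j')\to M(s)$ and hence $M(s)\in\Omega_U$; applying $M$ once more gives $M(\Omega_U)=\Omega_U$. I expect the main obstacle to be exactly this last point: establishing that $(u,v)\mapsto(-u,v)$ lifts to a well-defined homeomorphism of the fully regularised space, continuous across the exceptional lines, since points of $\Omega_U$ may lie at infinity in the $(u,v)$-chart. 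This should follow from the symmetry of the cascade of base points computed in Appendix \ref{app1}: one checks that the involution permutes the base points $b_i$ (up to relabelling) and therefore descends to an automorphism of $S_9(\infty)$, after which the continuity step closes the proof.
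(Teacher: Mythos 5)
Your proposal is correct and follows essentially the same route as the paper: derive the monodromy $u(ze^{3\pi i})=-u(z)$ from the single-valuedness of $y(x)$ and the Boutroux substitution, deduce $v(ze^{3\pi i})=v(z)$ from \eqref{boutrouxsys}, observe that $E$ is invariant, and transfer the symmetry to $\Omega_U$ by continuity. You in fact supply more detail than the paper does on the last two steps (the term-by-term check for $v$ and the sequence argument $z_j'=z_je^{3\pi i}$), which the paper compresses into ``the lemma follows from these results.''
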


\begin{corollary}\label{infpoles}
Every solution $u(z)$ of the second Painlev\'e equation whose limit set is not $\{0\}$ has infinitely many poles.

\begin{proof}
Let $u(z)$ be a solution of the Boutroux-Painlev\'e equation with only finitely many poles. Let $U(z)$ be the corresponding solution of the system in $S_9 \backslash S_{9}(\infty)$, and $\Omega_U$ the limit set of $U$. 

According to Theorem \ref{corlimitset}, $\Omega_U$ is a compact subset of $ S_{9}(\infty)\backslash I(\infty)$. If $\Omega_U$ intersects one of the two pole lines $L_6^{(3)}(\infty)$ or $L_9(\infty)$ at a point $p$, then there exists $z$ with $|z|$ arbitrarily large such that $U(z)$ is arbitrarily close to $p$, when the transversality of the vector field to the pole lines implies that $U(\zeta) \in (L_6^{(3)}\cup L_9)$ for a unique $\zeta$ near $z$, which means that $u(z)$ has a pole at $z = \zeta$. 

As this would imply that $u(z)$ has infinitely many poles, it follows that $\Omega_U$ is a compact subset of $S_9 (\infty) \backslash (I(\infty)\cup L_6^{(3)}(\infty) \cup L_9(\infty))$. However, $L_6^{(3)}(\infty) \cup L_9(\infty)$ is equal to the set of all points in $S_9(\infty)\backslash I(\infty)$ which project to the line $L_0(\infty)$ in the complex projective plane, and therefore $S_9 (\infty)\backslash (I(\infty)\cup L_6^{(3)}(\infty) \cup L_9(\infty))$ is the affine $(u, v)$ coordinate chart, of which $\Omega_U$ is a compact subset, which implies that $u(z)$ and $v(z)$ remain bounded for large $|z|$. 

It follows from the boundedness of $u$ and $v$ that $u(z)$ and $v(z)$ are equal to holomorphic functions of $1/z$ in a neighbourhood of $z=\infty$, which in turn implies that there are complex numbers $u(\infty)$, $v(\infty)$ which are the limit points of $u(z)$ and $v(z)$ as $|z| \rightarrow \infty$. In other words, $\Omega_U = \{(u(\infty), v(\infty))\}$, a one point set. Because the limit set $\Omega_U$ is invariant under the autonomous Hamiltonian system and contains only one point, this point is an equilibrium point of the autonomous Hamiltonian system given by:
\begin{subequations}
\begin{align}
\dot{u}=& v- u^2-1/2 \\
\dot{v}=& 2uv\, .
\end{align}
\end{subequations}
That is, $(u(\infty),v(\infty))\in \{(0,1/2),(i/\sqrt{2},0),(-i/\sqrt{2},0)\}$. By assumption in the corollary, we assume $(u(\infty),v(\infty)) \neq (0,1/2)$, and so $u(\infty)$ must equal one of $\pm i/\sqrt{2}$. But according to lemma \ref{invariance}, $(a,b)\in \Omega_U \iff (-a,b)\in\Omega_U$. This contradicts the deduction that $\Omega_U$ is a one point set.

\end{proof}

\end{corollary}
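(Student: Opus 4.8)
The plan is to argue by contradiction: assume $u(z)$ has only finitely many poles, and deduce that its limit set must reduce to the single equilibrium $\{(0,1/2)\}$, contradicting the hypothesis. First I would pass to the lifted solution $U(z)$ in Okamoto's space $S_9$ and invoke Theorem \ref{corlimitset} to obtain that $\Omega_U$ is a non-empty, compact, connected subset of $S_9(\infty)\setminus I(\infty)$ that is invariant under the flow of the autonomous Hamiltonian system.

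The central step is to show that, under the finiteness hypothesis, $\Omega_U$ cannot meet either of the two pole lines $L_6^{(3)}(\infty)$ or $L_9(\infty)$. The key geometric input is that the regularised vector field is transverse to these lines, and that a pole of $u(z)$ corresponds exactly to $U(z)$ crossing $L_6^{(3)}\cup L_9$. If $\Omega_U$ contained a point $p$ on one of these lines, then by definition of the limit set there would be a sequence $z_j\to\infty$ with $U(z_j)\to p$; transversality would then produce, for each large $j$, a genuine crossing at some nearby time $\zeta_j$, i.e.\ a pole of $u$ with $|\zeta_j|\to\infty$, contradicting that there are only finitely many poles. Hence $\Omega_U$ avoids $L_6^{(3)}(\infty)\cup L_9(\infty)$, which is precisely the preimage of the line at infinity $L_0$, so $\Omega_U$ lies in the affine $(u,v)$-chart.

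Compactness of $\Omega_U$ inside this affine chart then forces $u(z)$ and $v(z)$ to remain bounded as $|z|\to\infty$. Because the Boutroux system \eqref{boutrouxsys}--\eqref{boutrouxsys2} is analytic in $1/z$, a bounded solution extends to a holomorphic function of $1/z$ at $z=\infty$, so the limits $u(\infty),v(\infty)$ exist and $\Omega_U=\{(u(\infty),v(\infty))\}$ is a one point set. Invariance under the autonomous flow forces this point to be an equilibrium, hence one of $(0,1/2),\,(\pm i/\sqrt{2},0)$. The hypothesis rules out $(0,1/2)$, so $u(\infty)=\pm i/\sqrt{2}$; but Lemma \ref{invariance} gives $(a,b)\in\Omega_U\iff(-a,b)\in\Omega_U$, which would force both of $(\pm i/\sqrt{2},0)$ into $\Omega_U$, contradicting that it is a single point.

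I expect the main obstacle to be making the transversality argument rigorous. One must verify, from the explicit blow-up coordinates near $L_6^{(3)}$ and $L_9$ (with the computations deferred to the appendix), that the flow genuinely crosses these pole lines transversally for all sufficiently large $|z|$, so that mere proximity of $U(z_j)$ to a point of a pole line can be upgraded to an actual crossing at a nearby time. A secondary technical point is justifying the removable-singularity step, namely that boundedness of $(u,v)$ yields a convergent expansion in $1/z$, which relies on the analyticity of the $O(z^{-1})$ perturbation terms in the Boutroux system.
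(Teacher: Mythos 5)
Your proposal follows essentially the same argument as the paper's proof: contradiction via finiteness of poles, exclusion of the pole lines from $\Omega_U$ by transversality, boundedness of $(u,v)$ in the affine chart forcing a one-point limit set at an equilibrium, and the symmetry of Lemma \ref{invariance} ruling out the nonzero equilibria. The two technical points you flag (transversality from the explicit blow-up coordinates, and the removable-singularity step) are exactly the ones the paper relies on, deferring the first to the appendix computations.
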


\section{Special Solutions to $\Ptw$ in the Okamoto-Boutroux space}\label{specsols}

For generic values of $E$, Equation  (\ref{energyfct}) defines an elliptic curve with four distinct branch points. A natural question to ask is what happens at degenerate values of the autonomous Hamiltonian energy function $E$. Define 
\begin{equation}
\label{degH}
h(u,v,E):=v\left(\dfrac{v}{2}-\dfrac{1}{2}-u^2\right)-E 
\end{equation}
For generic $E$ the roots of $h$ are distinct. However, the roots of $h$ are no longer distinct when its gradient vanishes. These \lq\lq singular\rq\rq\ points are given by
\begin{equation}
\label{dis}
\nabla h=(h_u, h_v)  = \left( 2v(v^2-v-2E), u^4+u^2+\dfrac{1}{4}+2E\right)=0
\end{equation}
Simultaneous with $h=0$, these equations imply that $E=0$ or $E=-1/8$. In the case of $E=0$, Equation (\ref{energyfct}) gives
\begin{subequations}
\begin{align}
 (u,v)  &=(u,0)  \label{ecase1}\\
\noalign{or}
  (u,v)  &=(u,2u^2+1 )\label{ecase2}
\end{align}
\end{subequations}
In the case (\ref{ecase1}), then
\begin{align*}
\dot{E}&=\dfrac{-4E}{3z}+\dfrac{4\alpha v-(2\alpha+1)(2u^2+1)}{6z} \\
&=0+\dfrac{(2\alpha+1)(2u^2+1)}{6z}
\end{align*}
and so if $\alpha=-1/2$ then $\dot{E}=0$. An inductive argument on the $n$-th derivative of $E$ gives the following result:

\begin{proposition}
Under the conditions $v=0$ and $\alpha=-1/2$, $E^{(n)}=0$.
\end{proposition}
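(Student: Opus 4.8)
The plan is to fix $\alpha=-1/2$ throughout, substitute this value into both the evolution law for $E$ and the equation for $v$, and then run an induction on $n$ that simultaneously controls the $z$-derivatives of $E$ and of $v$ at the point in question.

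First I would simplify the two relevant evolution equations. Setting $\alpha=-1/2$ makes $2\alpha+1=0$, so equation (\ref{boutrouxsys2}) collapses to
\begin{equation}
\dot v = v\Bigl(2u-\tfrac{2}{3z}\Bigr),
\end{equation}
and the formula for $\dot E$ reduces to
\begin{equation}
\dot E = -\frac{4E}{3z}-\frac{v}{3z}=-\frac{1}{3z}\bigl(4E+v\bigr).
\end{equation}
Two structural facts are now visible. The first is that $v=0$ defines an invariant set: $\dot v$ is a smooth multiple of $v$, so at a point $z_0$ with $z_0\neq0$ and $u$ finite, the hypothesis $v(z_0)=0$ forces $v\equiv0$ near $z_0$ by uniqueness for this linear, homogeneous first-order equation. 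The second is that, from (\ref{energyfct}), $E=v\,(v/2-u^2-1/2)$ is an explicit multiple of $v$, so $E(z_0)=0$ whenever $v(z_0)=0$.

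These two facts give the result by a single clean induction. I would prove by induction on $n$ the pair of statements $v^{(n)}(z_0)=0$ and $E^{(n)}(z_0)=0$. The base case $n=0$ is the hypothesis $v(z_0)=0$ together with $E(z_0)=0$. For the inductive step I differentiate the reduced equations for $\dot v$ and $\dot E$ a total of $n$ times and apply the Leibniz rule: every resulting term on the right-hand side carries a factor $v^{(k)}$ or $E^{(k)}$ with $k\le n$, each of which vanishes at $z_0$ by the inductive hypothesis, while the remaining factors are derivatives of the coefficient functions $2u-2/(3z)$ and $1/(3z)$. Hence $v^{(n+1)}(z_0)=0$ and $E^{(n+1)}(z_0)=0$, closing the induction and delivering $E^{(n)}=0$ for all $n$.

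The step I expect to require the most care is the smoothness of the coefficients entering the Leibniz expansion: the argument only runs at a point $z_0$ where $z_0\neq0$ and where $u$ and all its derivatives are finite. Both conditions hold in the regime under study, since $|z_0|>\epsilon_1>0$ keeps $1/z$ analytic and, on the locus $v=0$ with $\alpha=-1/2$, the $u$-component obeys the Riccati-type equation $\dot u=-u^2-1/2-u/(3z)$, which is analytic away from its poles. Away from such poles every coefficient derivative is bounded, so no term in the expansion can fail to vanish. Equivalently, once $v\equiv0$ is established near $z_0$ one sees directly that $E\equiv0$ there, so that every derivative vanishes identically; the induction is simply the bookkeeping that makes this precise at the level of jets.
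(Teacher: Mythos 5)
Your proof is correct and follows essentially the same route as the paper, which computes $\dot E=0$ under $v=0$, $\alpha=-1/2$ and then appeals to ``an inductive argument on the $n$-th derivative of $E$''; your contribution is simply to make that induction explicit by carrying the vanishing of $v^{(n)}$ alongside that of $E^{(n)}$ (equivalently, by noting that $\{v=0\}$ is invariant for $\alpha=-1/2$, so $E\equiv 0$ identically). This is exactly the bookkeeping the paper leaves implicit.
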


These conditions are exactly those needed to specify the seed solution from which the family of Airy type solutions are generated. The Airy family of solutions for (\ref{painleve2}) is given as the logarithmic derivative of the general solution to Airy's differential equation in the form
\begin{equation}
\label{airyeqn}
\dfrac{d^2y}{dx^2}+\dfrac{x}{2}y=0.
\end{equation}
Under Boutroux's change of variables, the governing Airy equation becomes 
\begin{equation}
\label{airyeqnboutroux}
\dfrac{d^2y}{dz^2}+\dfrac{dy}{dz}\dfrac{1}{3z}+\dfrac{y}{2}=0,
\end{equation}
where $u(z)$, the solution to the Boutroux form of $\Ptw$ is related to $y(z)$ by $u(z)=\dot{y}/y$. The formal solution to this equation near a regular point $z_0$ is given by
\begin{align*}
   y(z) &=\sum_{k=0}^{\infty} a_k (z-z_0)^k.   
\end{align*}
The first two coefficients $a_0$ and $a_1$ are free, whilst the remaining $a_k$ satisfy 
\begin{align}
\nonumber   &a_2 =\dfrac{a_0}{4}+\dfrac{a_1}{6z_0},\\
\label{rec}   &3(k+2)(k+3)a_{k+3}+(k+2)(3k+4)a_{k+2}+\dfrac{3z_0}{2}a_{k+1}+\dfrac{3}{2}a_k=0, \,\, k\ge0 \, . 
\end{align}

A zero of the Airy function at $z=z_0$ corresponds to a pole of $u(z)$, with Laurent expansion given by
\begin{align}
\nonumber
 u(z) = &\frac{1}{z-z_0}+\frac{1}{6 z_0}+\frac{\left(-18 z_0^2-19\right) (z-z_0)}{108
   z_0^2}+\frac{\left(9 z_0^2+37\right) (z-z_0)^2}{216 z_0^3}\\
\label{lau_airy}   &\qquad+\frac{\left(-324
   z_0^4-2520 z_0^2-9565\right) (z-z_0)^3}{58320 z_0^4}+{\mathcal O}\left((z-z_0)^4\right).
\end{align}

In this case we have the solution crossing the pole line defined by $u_{92}=0$. In fact, there is a one-to-one correspondence between the Airy type solutions and the solutions which have a pole at $(u_{92},v_{92})=(0,0)$, when $\alpha=-1/2$. That is to say the Airy solutions are the unique solutions which cross the pole line $u_{92}=0$ at $v_{92}=0$ for $\alpha=-1/2$. 

The second zero (\ref{ecase2}) of the Hamiltonian energy is also interesting. In this case we find the condition $v=2u^2+1$ is consistent with the equations (\ref{boutrouxsys})-(\ref{boutrouxsys2}) if and only if $\alpha=1/2$. By considering the $z$-derivative of $E$, we find
\begin{align}
\dot{E}&=\dfrac{(2\alpha -1)(2u^2+1)}{6z}.
\end{align}
We have here two new cases for a zero derivative. For $\alpha=1/2$, we have Airy solutions as before, related by the B\"acklund transformation (\ref{bt2}) which shifts $\alpha$ to $\alpha+1$. However, in the case of $u(z)\sim\pm i/\sqrt{2}+O(z^{-1})$, we have a different type of solution. This is the leading order behaviour of the tronqu\'ee type solutions to $\Ptw$.

The other singular value of the autonomous energy $E$ which is of interest is when $E=-1/8$. In this case two of the four distinct branch points of the elliptic curve coalesce and the curve is degenerate. The double point occurs at $u=0$, $v=1/2$. This is the leading order behaviour of the rational solutions to the system (\ref{boutrouxsys})-(\ref{boutrouxsys2}). As seen in Corollary \ref{infpoles}, besides this degeneracy, all other solutions have infinitely many poles. We see that the special solutions of the nonautonomous equation correspond to degenerations in the autonomous flow of the system near infinity. 

\subsection{Blowing-up and coordinate systems}\label{coords}
In Section \ref{setup}, a coordinate change was performed to bring the Hamiltonian system for $\Ptw$ to the form (\ref{hamiltonian1}). We could have continued with the system of equations given by Hamilton's equations of motion for the system (\ref{preham}). However, there are several reasons we preferred the system (\ref{hamiltonian1}) over the former. Firstly, it allows us to study both $\Ptw$ and $\Pth$ in tandem. Secondly, it allows the reader to tie in the results with that of Okamoto \cite{oka:79}, who also used this coordinate choice. A third consideration is one which seems to be overlooked in the literature, which is that the number of blowups and the location of the base-points depends on the choice of coordinate system. If one were to study the system defined by (\ref{preham}), we would find the following sequence of base points:
\begin{align}
\nonumber (u_{02},v_{02})&= (u/v,1/v)\xleftarrow{(0,0)} (u_{12},v_{12})\xleftarrow{(0,0)} (u_{21},v_{21})    \\
\label{seq}
   &\xleftarrow{(1,0)} (u_{31},v_{31}) \xleftarrow{(0,0)} (u_{41},v_{41}) \xleftarrow{(-1/2,0)}  (u_{51},v_{51}) \xleftarrow{(\frac{1-2\alpha}{3z},0)}  (u_{61},v_{61}),\\
 \nonumber  &(u_{21},v_{21})   
   \xleftarrow{(-1,0)} (u_{71},v_{71}) \xleftarrow{(0,0)} (u_{81},v_{81}) \xleftarrow{(1/2,0)}  (u_{91},v_{91}) \xleftarrow{(\frac{1+2\alpha}{3z},0)} (u_{101},v_{101}).
\end{align}
where the label on each arrow represents the centre of the blow up in the preceding coordinate chart. From this one would be naively led to believe that 10 blow ups is the number required to resolve the singularities of this system. We can however bring this number back down to 9 if we consider a blow down of a line as a 'negative' blow-up. In this case, we can blow down the proper transform of the line $u_{01}=0$, a curve which can be considered the blow up of a regular point (having been a curve with self-intersection $-1$). Hence we see that the minimal blow up required is indeed canonical (in this case requiring 9 blow ups). Thus we observe that a 'good' coordinate choice in the beginning makes the calculation simpler and clearer.

\begin{remark}
The process of resolution presented explicitly in the appendix shows that the structure of the blow up (the number and location of the base points and the coordinate charts) of the Boutroux form of the Painlev\'e system is asymptotically close to that of the autonomous limit system, i.e., the system whereby the powers of $1/z$ in (\ref{boutrouxsys})-(\ref{boutrouxsys2}) are replaced with zero, which is solved by elliptic functions. Indeed, the difference between the two systems is only visible in the last two branches of the blow up, at the pole lines. 
This is noteworthy because the operations of blowing up and taking limits do not commute in general.
\end{remark}

\section{Proof of Theorem \ref{theorepel}}\label{proof}

In this section we show that the infinity set $I$ is a repeller. This implies that every solution which starts in Okamoto's space of initial conditions remains there for all complex times $z$. The proof of this result requires the construction of a distance measure $d$, used to measure the distance to the infinity set.

Let $\mathcal{S}$ denote the fiber bundle of the surfaces $S_9 = S_9(z)$, $z \in\mathbb{C}\backslash 0$. If $\mathcal{I}$ denotes the union in $\mathcal{S}$ of all $I(z)$, $z \in\mathbb{C} \backslash 0$, then $\mathcal{S}$ $\backslash$ $\mathcal{I}$ is Okamoto's Òspace of initial conditionsÓ, fibered by the surfaces $S_9(z) \backslash I(z)$. We will analyse the asymptotic behaviour for $|z| \rightarrow \infty$ of the solutions of the Painlev\'e system (\ref{boutrouxsys})-(\ref{boutrouxsys2}), by studying the $z$-dependent vector field in the coordinate systems introduced in Section \ref{setup}. 

Near the part of the infinity set given by $L_0^{(9)} \cup L_1^{(8)}\cup L_2^{(7)} \cup L_3^{(6)}\cup L_4^{(5)} \cup L_7^{(2)}$ we use the function $1/E$ to measure the distance to the infinity set. Because the lines $L_5^{(4)}$ and $L_8^{(1)}$ contain the lift of base points of $E$, the function $1/E$ is no longer a good measure of distance to the infinity set. These require an alternative measure of distance. Near $L_5^{(4)}$ we use $w_{62}$ and near $L_8^{(1)}$ we use $w_{91}$, where $w_{ij}$ is the Jacobian of the coordinate change from $(u,v)$ to $(u_{ij},v_{ij})$.
\begin{lemma}
The reciprocal of the autonomous Hamiltonian energy function E, and the Jacobians $w_{62}$ and $w_{91}$ are zero on the infinity set.
\end{lemma}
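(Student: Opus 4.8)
The plan is to verify the three assertions directly in the coordinate charts produced by the resolution in Appendix~\ref{app1}; all three functions are rational on $S_9$, regular and nonvanishing on the affine $(u,v)$ chart, so the task is simply to locate their zeros along the exceptional locus lying over $L_0$.

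First I would treat $1/E$. Since $E=v^2/2-u^2v-v/2$ is a cubic that is regular on $\mathbb{C}^2_{(u,v)}$, its polar divisor on $\mathbb{P}^2(\mathbb{C})$ is $3L_0$. Passing to the two charts $(u_{01},v_{01})=(1/u,v/u)$ and $(u_{02},v_{02})=(u/v,1/v)$ covering $L_0$ and substituting gives, for instance, $E=-v_{01}/u_{01}^3+v_{01}^2/(2u_{01}^2)-v_{01}/(2u_{01})$, which has a pole of order three along $u_{01}=0$; hence $1/E$ extends holomorphically by $0$ to $L_0$. I would then propagate this substitution through the successive monomial coordinate changes of Appendix~\ref{app1}, at each stage reading off the order of the pole of $E$ along the proper transform, and check that $E$ retains a genuine pole along each of $L_0^{(9)},L_1^{(8)},L_2^{(7)},L_3^{(6)},L_4^{(5)}$ and $L_7^{(2)}$, so that $1/E=0$ there. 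On $L_5^{(4)}$ and $L_8^{(1)}$ the pole of $E$ meets a base point of the pencil $\{E=\mathrm{const}\}$, so $E$ becomes indeterminate rather than infinite; this is precisely why $1/E$ fails as a distance measure on those two lines and is replaced by a Jacobian.

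For the Jacobians I would first record the seed computation $w_{01}=\partial(u_{01},v_{01})/\partial(u,v)=-1/u^3=-u_{01}^3$, so that the Jacobian of the passage to the infinity chart already vanishes to order three along $L_0$. Because every base point lies on $L_0$ or on an exceptional line created over it, the Jacobian in any later chart is the product of this factor with the Jacobians of the intervening blow-ups; each such blow-up contributes a factor of the form $1/(\text{centre coordinate})$, and the outcome is a monomial in the chart coordinates times a nonvanishing unit. The assertion then reduces to computing, from the explicit charts of Appendix~\ref{app1}, the net exponent of $w_{62}$ along $L_5^{(4)}$ in the chart $(u_{62},v_{62})$ and of $w_{91}$ along $L_8^{(1)}$ in the chart $(u_{91},v_{91})$, and verifying that each exponent is strictly positive; the vanishing of $w_{62}$ and $w_{91}$ on those lines follows.

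The main obstacle is the bookkeeping in this Jacobian step. Along a given proper transform the net order of $w_{ij}$ is the outcome of a competition between the order-three zero inherited from $L_0$ and the order-one poles introduced at the intermediate exceptional lines: for instance, blowing up a point of $L_0$ already turns the Jacobian $-u_{01}^3$ into $-a^2$ in the new chart, still vanishing but now only to order two. One must therefore carry the exact monomial substitutions of Appendix~\ref{app1}, not merely their leading orders, and confirm that after nine blow-ups the accumulated exponents along $L_5^{(4)}$ and $L_8^{(1)}$ remain positive. By contrast the computation for $E$ is routine, each step being a direct monomial evaluation of a fixed cubic.
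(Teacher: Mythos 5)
Your plan is essentially the paper's own proof: the paper simply points to the explicit expressions for $E$, $w_{62}$ and $w_{91}$ computed chart by chart in Appendix~\ref{app1} and reads off that each of $1/E$, $w_{62}$, $w_{91}$ is $\mathcal{O}(v_{i1}^{k})$ for some positive integer $k$ as one approaches each line $L_i^{(9-i)}$ of $I$. Your divisor-theoretic framing (polar divisor $3L_0$ of $E$, Jacobians as monomials times units accumulated through the monoidal transformations) is a tidier way of organising the same bookkeeping and buys nothing essentially new.

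One point of coverage to fix: the infinity set $I$ includes $L_5^{(4)}$ and $L_8^{(1)}$, and the lemma asserts that $1/E$ vanishes there as well, but you explicitly set those two lines aside on the grounds that $E$ ``becomes indeterminate rather than infinite'' on them. That is an overstatement: $E$ is indeterminate only at the finitely many lifted base points of the pencil $\{E=\mathrm{const}\}$ lying on those lines, and has a genuine pole along the rest of them --- in the $(u_{62},v_{62})$ chart the denominator of $E$ carries the factor $u_{62}v_{62}$ while the numerator is generically nonzero as $v_{62}\to 0$, so $1/E\to 0$ there too, and similarly near $L_8^{(1)}$. The presence of the base points is the reason $1/E$ is a poor \emph{distance measure} near those lines (hence the switch to $w_{62}$ and $w_{91}$ in Lemma~\ref{lemma2}), but it does not exempt you from verifying the vanishing of $1/E$ on them, which the explicit formulas do give you off the base points.
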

\begin{proof}
The calculations of $1/E$, $w_{62}$ and $w_{91}$ in each coordinate chart of Okamoto's space are provided in Appendix \ref{app1}. For each $0\le i \le 8$, the lines $L_i^{(9-i)}$ in the infinity set are determined by the relation $v_{i1}=0$, with $u_{i1}\in \mathbb{C}$. These show that in the limit as we approach an exceptional line $L_i^{(9-i)}$, we have 
\begin{equation}
\label{inf}
\left.
\begin{array}{ll}
   1/E &={\mathcal O}\bigl( v_{i1}^{\lambda_i}\bigr)\\
    w_{62} &={\mathcal O}\bigl(v_{i1}^{\mu_i}\bigr) \\  
    w_{91} &={\mathcal O}\bigl( v_{i1}^{\nu_i}\bigr) 
    \end{array}
    \right\}  \quad {\rm as}\, v_{i1} \rightarrow 0 \,
\end{equation}
for some some positive integers $\lambda_i, \mu_i, \nu_i$. It follows that all three functions vanish on each $L_i^{(9-i)}\in I$.
\end{proof}
\begin{lemma}\label{lemma1}
Let $I_3:= \cup_{i=0}^3 I_i^{(9-i)}$. For every $\epsilon > 0$, there exists a neighbourhood $U$ of $I^3$ in S such that $|\dot{E}/(E)+4/(3z)|<\epsilon$ in $U$ and for all $z\in\mathbb{C}$\textbackslash $\{0\}$. For every compact subset K of $L_4^{(5)}\backslash L_5^{(4)} \cup L_7^{(2)}\backslash L_8^{(1)}$ there exists a neighbourhood V of K in $S_9$ and a constant $C>0$ such that $|3z\dot{E}/(4E)|$ $\leq C$ in V and for all $z\in \mathbb{C}\backslash \{ 0\}$.
\end{lemma}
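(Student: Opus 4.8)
The plan is to reduce both assertions to the behaviour of the single, \emph{$z$-independent}, ratio $N/E$, where
\begin{equation*}
N:=4\alpha v-(2\alpha+1)(2u^2+1)
\end{equation*}
is the numerator occurring in the expression for $\dot E$ that follows \eqref{energyfct}. Dividing that expression by $E$ yields the two identities
\begin{equation*}
\frac{\dot E}{E}+\frac{4}{3z}=\frac{N}{6zE},\qquad \frac{3z\dot E}{4E}=-1+\frac{N}{8E}.
\end{equation*}
The first rewrites the quantity in the first assertion as $N/(6zE)$, and the second shows that the quantity in the second assertion carries no explicit dependence on $z$ whatsoever; hence the whole lemma becomes a statement about how fast $N$ grows relative to $E$ as one approaches the exceptional lines. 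I would note at the outset that under the Boutroux weighting ($u$ of weight $1$, $v$ of weight $2$) $E$ has weight $4$ while $N$ has weight $2$, so that away from the base points of $E$ one expects $N/E\to 0$ near infinity; the content of the lemma is to make this precise and to isolate the two lines where it fails.

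For the first assertion I would work in the charts $(u_{i1},v_{i1})$, $0\le i\le 3$, constructed in Appendix \ref{app1}, in which each line $L_i^{(9-i)}$ is $\{v_{i1}=0\}$. Using the expansions of $1/E$ established there (the preceding lemma gives $1/E={\mathcal O}(v_{i1}^{\lambda_i})$) together with the corresponding expansion of $N$, the task is to verify that $N/E\to 0$ as $v_{i1}\to 0$, i.e. that on these first four lines $N$ is genuinely of lower order than $E$. Granting this, $N/E$ extends continuously by $0$ to the lift of $I_3$ in the total space $\mathcal S$; since these early blow-up charts and their centres are $z$-independent, so is $N/E$, and therefore $N/(6zE)=\dot E/E+4/(3z)$ is continuous and vanishes on $I_3$. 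One may then take $U=\{\,|N/(6zE)|<\epsilon\,\}$, which is open and contains $I_3$. The only point requiring care is that, because $1/z$ is unbounded as $z\to 0$, this tube must be allowed to narrow as $|z|\to 0$; this is legitimate precisely because $z=0$ is excluded from the base of $\mathcal S$.

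For the second assertion the argument is the same in spirit but is carried out on the lines $L_4^{(5)}$ and $L_7^{(2)}$. Here the chart computations of the appendix should show that $N$ and $E$ blow up at the \emph{same} order, so that $N/E$ extends to a continuous, bounded, but generically nonzero function, and correspondingly $3z\dot E/(4E)=-1+N/(8E)$ is bounded. The exceptions are exactly $L_5^{(4)}$ and $L_8^{(1)}$, which carry the lifted base points of $E$; there $1/E$ no longer measures distance and $N/E$ is unbounded, which is why the statement is restricted to compact subsets $K$ of $(L_4^{(5)}\setminus L_5^{(4)})\cup(L_7^{(2)}\setminus L_8^{(1)})$. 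On such a $K$ the function $N/E$ is continuous and finite, hence bounded on some neighbourhood $V$ of $K$; taking $C$ to be this bound augmented by $1$ gives the claim, uniformly in $z$ because $N/E$ does not involve $z$.

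The genuinely laborious step, and the only real obstacle, is the chart-by-chart asymptotic bookkeeping in Appendix \ref{app1}: one must produce the leading-order forms of both $E$ and $N$ in every coordinate system covering $L_0^{(9)},\dots,L_4^{(5)},L_7^{(2)}$ and confirm the three regimes $N=o(E)$ on $I_3$, $N={\mathcal O}(E)$ but not $o(E)$ on $L_4^{(5)}\cup L_7^{(2)}$, and $N/E\to\infty$ at $L_5^{(4)},L_8^{(1)}$. Two further technical points I would check are that each projective line $L_i^{(9-i)}\cong\mathbb P^1$ is controlled in both of its charts, so that the estimate persists up to and across the intersection points with the neighbouring lines, and that the charts in question really are $z$-independent, since this is what makes the bounds uniform over $z\in\mathbb C\setminus\{0\}$.
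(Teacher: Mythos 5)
Your proposal is correct and takes essentially the same route as the paper: the quantity $r=\dot E/E+4/(3z)$ that the paper tabulates chart by chart in (\ref{firstr})--(\ref{lastr}) is exactly your $N/(6zE)$, and both arguments reduce to the same appendix bookkeeping showing that this ratio vanishes on $I_3$ while $z\,r$ (equivalently your $N/(8E)$, up to the constant $-1$) stays bounded on compact subsets of $L_4^{(5)}\setminus L_5^{(4)}$ and $L_7^{(2)}\setminus L_8^{(1)}$. Factoring out the explicit $z$-dependence through the $z$-free ratio $N/E$ is a tidy repackaging, not a different proof.
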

\begin{proof}
Because $I_3$ is compact, it suffices to show that every point of $I_3$ has a neighbourhood in which the estimate holds. On $I_3$, the function $r:=\dot{E}/(E)+4/(3z)$ is equal to the following in each coordinate chart:

\begin{align}
\label{firstr}
  r_{01} = & \dfrac{u_{01} \left(2 \alpha u_{01}^{2}-4 \alpha u_{01} v_{01}+4 \alpha+u_{01}^{2}+2\right)
}{3 v_{01} z \left(u_{01}^{2}-u_{01} v_{01}+2\right)}   \\
  r_{02} = &  \dfrac{v_{02} \left(4 \alpha u_{02}^{2}+2 \alpha v_{02}^{2}-4 \alpha v_{02}+2 u_{02}^{2}+
v_{02}^{2}\right)}
{3 z \left(2 u_{02}^{2}+v_{02}^{2}-v_{02}\right) } \\
   r_{11} =& \dfrac{v_{11} \left(4 \alpha u_{11}^{2} v_{11}+2 \alpha v_{11}-4 \alpha+2 u_{11}^{2} v_{11}+
v_{11}\right)}
{3 z \left(2 u_{11}^{2} v_{11}+v_{11}-1\right)} \\
r_{12} =&\dfrac{u_{12} v_{12} \left(2 \alpha u_{12} v_{12}^{2}+4 \alpha u_{12}-4 \alpha v_{12}+u_{12} v_{12}
^{2}+2 u_{12}\right)}{3 z \left(u_{12} v_{12}^{2}+2 u_{12}-v_{12}\right)} \end{align}
\begin{align}
r_{21}=&\dfrac{u_{21} v_{21}^{2} \left(2 \alpha u_{21} v_{21}^{2}+4 \alpha u_{21}-4 \alpha+u_{21} v_{21}
^{2}+2 u_{21}\right)}{3 z \left(u_{21} v_{21}^{2}+2 u_{21}-1\right)
}\,\\
r_{22}=&\dfrac{u_{22}^{2} v_{22} \left(2 \alpha u_{22}^{2} v_{22}^{2}-4 \alpha v_{22}+4 \alpha+u_{22}
^{2} v_{22}^{2}+2\right)}{3 z \left(u_{22}^{2} v_{22}^{2}-v_{22}+2
\right)}\,\\
r_{31}=&\left({6 z \left(2 u_{31} v_{31}^{2}+4 u_{31}+v_{31}\right)}\right)^{-1} \times \\ \nonumber
     & v_{31} \big(8 \alpha u_{31}^{2} v_{31}^{4}+16 \alpha u_{31}^{2} v_{31}^{2}+8 \alpha
 u_{31} v_{31}^{3} +2 \alpha v_{31}^{2}-4 \alpha+4 u_{31}^{2} v_{31}^{4}\\ \nonumber
	 &\hspace{1in}+8 u_{31}^{2} v_{31}^{2}+4 u_{31} v_{31}^{3}+8 u_{31} v_{31}+v_{31}^{2}+2\big)  \\
r_{32}=&u_{32} v_{32}^{2} \left({6 z \left(2 u_{32}^{2} v_{32}^{2}+u_{32} v_{32}^{2}+4\right)}\right)^{-1} \times \\ \nonumber
	& \big(8 \alpha u_{32}^{4} v_{32}^{2}+8 \alpha u_{32}^{3} v_{32}^{2}+
2 \alpha u_{32}^{2} v_{32}^{2}+16 \alpha u_{32}^{2}-4 \alpha+4 u_{32}^{4} v_{32}^{2}+4 u_{32}
^{3} v_{32}^{2}\\ \nonumber
	&\hspace{1in}+u_{32}^{2} v_{32}^{2}+8 u_{32}^{2}+8 u_{32}+2\big) 
\\
r_{42}=& v_{42}\left(6 z \left(2 u_{42}^{2} v_{42}^{2}+v_{42}+4\right)\right)^{-1} \times \\ \nonumber
	&\big(8 \alpha u_{42}^{6} v_{42}^{4}+8 \alpha u_{42}^{4} v_{42}^{3}+16 \alpha
 u_{42}^{4} v_{42}^{2}+2 \alpha u_{42}^{2} v_{42}^{2}-4 \alpha+4 u_{42}^{6} v_{42}^{4}+4 u_{42}
^{4} v_{42}^{3} \\ \nonumber
	& \hspace{1in}+8 u_{42}^{4} v_{42}^{2}+u_{42}^{2} v_{42}^{2}+8 u_{42}^{2} v_{42}+2\big) 
\\
r_{71}=&\dfrac{u_{71} \left(2 \alpha u_{71}^{2} v_{71}^{2}-4 \alpha u_{71} v_{71}^{2}+4 \alpha+u_{71}
^{2} v_{71}^{2}+2\right)}{3 z \left(u_{71}^{2} v_{71}^{2}-u_{71} v_{71}^{2}
+2\right)}\, .\label{lastr}
\end{align}

The part $L_0^{(9)}\backslash L_1^{(9)}$ is equal to the line $v_{02}=0$, on which $r_{02}=0$. The part $L_1^{(8)}\backslash L_2^{(7)}$ is equal to the line $u_{12}=0$, on which $r_{22}=0$. The part $L_3^{(6)}\backslash L_4^{(5)}$ is equal to the line $u_{32}=0$, on which $r_{32}=0$. The part $L_3^{(6)}\backslash L_2^{(7)}$ is equal to the line $v_{42}=0$ on which $r_{42}=0$. This covers the entire $I_3$ and so the proof for the first part of the lemma is complete.

For the second statement we have the line $L_4^{(5)}\backslash (L_3^{(6)}\cup L_5^{(4)})$ is equal to the line $v_{41}=0$, $u_{41}\neq 1/4$ on which $r_{41}=(1-2\alpha)(3z(4u_{41}+1))$. Therefore for every compact subset $K_1$ of  $L_4^{(5)}\backslash L_5^{(4)}$ there exists a neighbourhood $V_1$ such that $r_{41}$ is bounded. Similarly we have the line $L_7^{(2)}\backslash (L_0^{(9)}\cup L_8^{(1)})$ is equal to the line $u_{72}=0$, $v_{72}\neq 0$, on which $r_{72}=(1+2\alpha)/(3v_{72}z)$. Therefore on every compact subset $K_2$ of  $L_7^{(2)}\backslash L_8^{(1)}$ there exists a neighbourhood $V_2$ such that $r_{41}$ is bounded. We take $K=K_1 \cup K_2$ and $V=V_1\cup V_2$, which completes the proof of the second statement of the lemma.
\end{proof}

In the following two lemmas we require asymptotic information about the dynamics near the infinity set. From the appendix, we have that the line $L_{5}^{(4)}\backslash L_{4}^{(5)}$ is determined by the relations $v_{62}=0$, $u_{62} \in \mathbb{C}$, while the line $L_{8}^{(1)}\backslash L_{7}^{(2)}$ is determined by the relations $u_{91}=0$, $v_{91} \in \mathbb{C}$. Similarly the lines $L_{4}^{(5)}\backslash L_{3}^{(6)}$ and $L_{7}^{(2)}\backslash L_{0}^{(9)}$ are determined by $v_{52}=0$, $u_{52} \in \mathbb{C}$ and $u_{81}=0$, $v_{81} \in \mathbb{C}$ respectively. Table \ref{asymtable} shows the leading order behaviour of the solutions and the functions used as a measure of distance to the infinity set near these lines.

\begin{center}
\begin{table}
\begin{tabular}{|l|l|}
  \hline
  &\\
  Near $L_5^{(4)}$ as $v_{62}\rightarrow 0\,$: & Near $L_8^{(1)}$ as $u_{91}\rightarrow 0\,$: \\
  $\dot{u}_{62} \sim -v_{62}^{-1} $ &  $\dot{v}_{91} \sim u_{91}^{-1}  $  \\
  $w_{62} \sim -v_{62}/4 $ &  $w_{91} \sim -u_{91} $ \\
  $\dfrac{\dot{w}_{62}}{w_{62}}\sim \dfrac{4}{3z}+O(w_{62}) $ & $\dfrac{\dot{w}_{91}}{w_{91}}\sim \dfrac{4}{3z}+O(w_{91}) $ \\
  $Ew_{62} \sim 1 - \dfrac{2\alpha-1}{12u_{62}z}\, .$ & $Ew_{91}\sim 1 - \dfrac{2\alpha+1}{3v_{91}z}\, .$\\
  &\\
  \hline
  &\\
   Near $L_4^{(5)}$ as $v_{52}\rightarrow 0\,$: & Near $L_7^{(2)}$: as $u_{81}\rightarrow 0\,$:\\
  $\dot{u}_{52} \sim -2v_{52}^{-1} $ &  $\dot{v}_{81} \sim 2u_{81}^{-1}  $  \\
   $\dot{v}_{52} \sim u_{52}^{-1} $ &  $\dot{u}_{81} \sim -u_{v_{1}}^{-1}  $  \\
  $w_{62} \sim -u_{52}v_{52}^2/4 $ &  $w_{91} \sim -v_{81}u_{81}^2 $ \\
  $Ew_{62} \sim 1 $ & $Ew_{91}\sim 1  $\\
   $\dfrac{\dot{E}}{E}\sim -\dfrac{4}{3z}+\dfrac{1-2\alpha}{12u_{52}z}\sim -\dfrac{4}{3z}+\dfrac{1-2\alpha}{12z}\dot{v}_{52}\, .$ &  $\dfrac{\dot{E}}{E}\sim -\dfrac{4}{3z}+\dfrac{1+2\alpha}{3v_{81}z}\sim -\dfrac{4}{3z}+\dfrac{1+2\alpha}{12z}\dot{u}_{81}\, .$ \\
  &\\
  \hline
\end{tabular}
  \caption{Comparison of asymptotic behaviour near the infinity set}
  \label{asymtable}
\end{table}
\end{center}

\begin{remark}
Note that the behaviour of the Boutroux-Painlev\'e system near the pairs $L_5^{(4)}$ and $L_8^{(1)}$, $L_4^{(5)}$ and $L_7^{(2)}$ are equivalent to leading order, up to multiplicative constants. Due to this observation, the proofs of the following lemmas will only consider one of the two in each pair, while the proof near the other can be inferred from the similar behaviour of its pair.
\end{remark}

\begin{remark}
The solution cannot be simultaneously close to both $L_6^{(3)}$ and $L_9$. This can be seen by the coordinate  relation
\begin{displaymath}
v_{91} \propto v_{61}^{-3}\,\, \mbox{as $v_{61} \rightarrow 0$.}
\end{displaymath}
That is, as a solution moves towards one of the pole lines (given by $v_{61}=0$ or $v_{91}=0$), the solution becomes unboundedly distant from the other pole line.
\end{remark}

In the following lemma we show that the three functions $1/E$, $w_{62}$ and $w_{91}$ can be stitched together  to form a continuous distance function, which will be later used to show that the infinity set is repelling. 

\begin{lemma} \label{lemma2}
Suppose $z$ is bounded away from zero. There exists a continuous complex valued function $d$ on a neighbourhood of $I$ in $S_9$ such that $d=1/E$ in a neighbourhood of $I \backslash (L_{5}^{(4)} \bigcup L_8^{(1)})$ in $S_9$, $d=w_{62}$ in a neighbourhood of $L_{5}^{(4)}\backslash L_{4}^{(5)}$ in $S_9$, and $d=w_{91}$ in a neighbourhood of $L_{8}^{(1)}\backslash L_{7}^{(2)}$ in $S_9$. Moreover, $E\,d\rightarrow 1$, $d/w_{52}\rightarrow 1$ when approaching  $L_{5}^{(4)}\backslash L_{4}^{(5)}$, and $E\,d\rightarrow 1$, $d/w_{91}\rightarrow 1$ when approaching  $L_{8}^{(1)}\backslash L_{7}^{(2)}$.

If the solution at the complex time $z$ is sufficiently close to a point of $L_{5}^{(5)}\backslash L_{4}^{(6)}$ (resp. $L_{8}^{(1)}\backslash L_{7}^{(2)}$), then there exists a unique $\zeta\in \mathbb{C}$, such that $|z-\zeta|={\mathcal O}(|d(z)||u_{62}(z)|)$ (resp. ${\mathcal O}(|d(z)||v_{91}(z)|))$, where d(z) is small, $u_{62}(z)$ (resp. $v_{91}(z)$) is bounded and $u_{62}(\zeta)=0$ (resp. $v_{91}(\zeta)=0$). That is, the point $z=\zeta$ is a pole of the Boutroux-Painlev\'e system. For large finite $R_5$  $\in \mathbb{R}_{>0}$, the connected component of $\zeta$ in $\mathbb{C}$ of the set of all $z\in \mathbb{C}$ such that $|u_{62}(z)|\leq R_5$ is an approximate disc $D_5$ with centre at $\zeta$ and radius $\sim |\delta | R_5$, and $z\mapsto u_{62}(z)$ is a complex analytic diffeomorphism from $D_5$ onto $\{u_{62}\in \mathbb{C} \bigm | |u_{62}|\leq R_5\}$.

Depending on which pole line the solution is close to, we write $\delta:=d(\zeta)=w_{62}(\zeta)$ or $\delta:=d(\zeta)=w_{91}(\zeta)$. Then we have $d(z)/\delta \sim 1$ as $\delta \rightarrow 0$. 
\end{lemma}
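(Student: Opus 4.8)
The plan is to build $d$ by a partition-of-unity gluing of the three local candidates $1/E$, $w_{62}$, $w_{91}$, and then to analyse the pole crossing near $L_5^{(4)}$; the case of $L_8^{(1)}$ is identical after the substitutions indicated in the remark following Table \ref{asymtable}.

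\emph{Construction of $d$.} First I would cover a neighbourhood of $I$ in $S_9$ by three open sets: a set $N_0$ on which $1/E$ is finite and vanishes exactly on $I\backslash(L_5^{(4)}\cup L_8^{(1)})$, a set $N_5$ around $L_5^{(4)}\backslash L_4^{(5)}$ on which $(u_{62},v_{62})$ are valid coordinates, and a set $N_8$ around $L_8^{(1)}\backslash L_7^{(2)}$ on which $(u_{91},v_{91})$ are valid. Choosing a continuous partition of unity $\chi_0,\chi_5,\chi_8$ subordinate to this cover, I would set
\begin{equation*}
d:=\chi_0\,\frac{1}{E}+\chi_5\,w_{62}+\chi_8\,w_{91}.
\end{equation*}
This $d$ is continuous and reduces to the single relevant candidate wherever the corresponding $\chi$ equals $1$, which gives the three stated local identities. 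The compatibility on the overlaps is exactly the content of the rows $Ew_{62}\sim1$ and $Ew_{91}\sim1$ of Table \ref{asymtable}: since $Ew_{62}\to1$ and $Ew_{91}\to1$ there, we get $Ed=\chi_0+\chi_5\,Ew_{62}+\chi_8\,Ew_{91}\to\chi_0+\chi_5+\chi_8=1$, and likewise $d/w_{62}\to1$ when approaching $L_5^{(4)}\backslash L_4^{(5)}$ (where $\chi_8=0$ and $1/(Ew_{62})\to1$) and $d/w_{91}\to1$ near $L_8^{(1)}\backslash L_7^{(2)}$.

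\emph{The pole crossing near $L_5^{(4)}$.} On $N_5$ the line $L_5^{(4)}\backslash L_4^{(5)}$ is $v_{62}=0$, and there $d=w_{62}\sim -v_{62}/4$, so $v_{62}\sim -4d$. By Table \ref{asymtable} the dominant balance is $\dot u_{62}\sim -v_{62}^{-1}\sim 1/(4d)$, while $\dot w_{62}/w_{62}\sim 4/(3z)+\mathcal{O}(w_{62})$ shows $d$ is slowly varying. I would make this precise by taking $u_{62}$ as the independent variable and integrating, over $\{|u_{62}|\le R_5\}$ with $\delta:=d(\zeta)$,
\begin{equation*}
\frac{dz}{du_{62}}=\frac{1}{\dot u_{62}}\sim 4\,\delta,\qquad \frac{dw_{62}}{du_{62}}=\frac{\dot w_{62}}{\dot u_{62}}=\mathcal{O}(\delta^2/z)+\mathcal{O}(\delta^3).
\end{equation*}
A Gronwall/contraction estimate then yields $w_{62}(u_{62})=\delta(1+o(1))$ and $z(u_{62})=\zeta+4\delta\,u_{62}(1+o(1))$, where $\zeta$ is the unique point with $u_{62}(\zeta)=0$; this is the pole of the Boutroux--Painlev\'e system. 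Inverting gives $u_{62}(z)\sim (z-\zeta)/(4\delta)$, hence the unique zero and the estimate $|z-\zeta|=\mathcal{O}(|d(z)||u_{62}(z)|)$, and it exhibits $z\mapsto u_{62}(z)$ as a complex analytic diffeomorphism of the approximate disc $D_5=\{z:|z-\zeta|\le C|\delta|R_5\}$, of radius $\sim|\delta|R_5$, onto $\{u_{62}\in\mathbb{C}\mid|u_{62}|\le R_5\}$. Finally, since $d$ varies by only $o(\delta)$ across $D_5$, we obtain $d(z)/\delta\to1$ as $\delta\to0$.

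\emph{Main obstacle.} The delicate point is not the formal leading balance but uniform error control: one must show that the $\mathcal{O}(w_{62})$ and $\mathcal{O}(1/z)$ corrections to $\dot w_{62}/w_{62}$, together with the subleading terms in $\dot u_{62}$, stay negligible \emph{uniformly} over the whole disc $\{|u_{62}|\le R_5\}$ for large finite $R_5$, so that the map $u_{62}\mapsto z$ constructed by integration is genuinely injective and holomorphic with nonvanishing derivative. This is what upgrades the approximate affine relation into an actual biholomorphism and pins down the radius $\sim|\delta|R_5$; it relies on the asymptotic estimates of Appendix \ref{app1} rather than the leading orders of Table \ref{asymtable} alone, and on a complex-variable argument (argument principle, or a quantitative inverse function estimate from the small variation of the derivative) to conclude bijectivity.
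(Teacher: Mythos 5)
Your proposal is correct and follows essentially the same route as the paper: the same leading-order balances from Table \ref{asymtable} ($\dot u_{62}\sim -v_{62}^{-1}$, $w_{62}\sim -v_{62}/4$, $\dot w_{62}/w_{62}=4/(3z)+{\mathcal O}(w_{62})$) drive the argument, the compatibility of the three local candidates rests on $Ew_{62}\to 1$ and $Ew_{91}\to 1$ as $u_{62}\to\infty$ (equivalently $v_{52}\to 0$), and the approximate disc $D_5$ and the biholomorphism $z\mapsto u_{62}(z)$ come from the near-constancy of $w_{62}$ plus the resulting affine relation $u_{62}\sim (z-\zeta)/(4\delta)$. Your two presentational departures — making the gluing explicit via a partition of unity (the paper leaves the construction of $d$ implicit and only verifies the asymptotic matching) and integrating with $u_{62}$ as the independent variable rather than first integrating $\dot w_{62}/w_{62}$ in $z$ — are equivalent in substance, and your identification of the uniform error control over $\{|u_{62}|\le R_5\}$ as the delicate point matches where the paper's own argument does its work.
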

\begin{proof}
Without loss of generality, we assume the solution is near the part of the infinity set with which the poles of negative residue are associated. From the explicit details presented in the appendix it follows that  $L_{5}^{(5)}\backslash L_{4}^{(6)}$ is determined by the equations $v_{62}=0$, $u_{62} \in \mathbb{C}$. Asymptotically for $v_{62}\rightarrow 0$, and bounded $u_{62}$ and $z^{-1}$, we have
\begin{subequations}
\begin{eqnarray}
\dot{u}_{62} & \sim &  -v_{62}^{-1}  \\
w_{62} & \sim & -v_{62}/4  \\
\dot{w}_{62}/w_{62} &=&\frac{4}{3z} + {\mathcal O}(v_{62})=\frac{4}{3z}+{\mathcal O}(w_{62}) \label{w62eqn} \\
Ew_{62} &\sim & 1+ (1-2\alpha)/(12u_{62}z)\label{Ew62eqn}
\end{eqnarray}
\end{subequations}
Provided the solution to the Boutroux-Painlev\'e system is close to $L_{5}^{(5)}\backslash L_{4}^{(6)}$ then \eqref{w62eqn} gives 
\begin{displaymath}
w_{62}=\left(\frac{z}{\zeta}\right)^{4/3}w_{62}(\zeta)(1+{o}(1))\,.
\end{displaymath}

If $|z-\zeta|\ll |\zeta|$ (iff $z/\zeta\sim 1$) then we have $w_{62}(z)\sim w_{62}(\zeta)$ and so $v_{62}$ is approximately constant ($v_{62}\sim -4w_{62}(\zeta)$) and so

\begin{displaymath}
u_{62}(z)\sim u_{62}(\zeta)-v_{62}^{-1}(\zeta)(z-\zeta).
\end{displaymath}
So $u_{62}$ fills an approximate disc, centred at $u_{62}(\zeta)$ with radius $\sim R$ if $z$ runs over an approximate disc of radius $|v_{62}(\zeta)| R$. If $|v_{62}(\zeta)| \ll 1/|\zeta|$, then the solution at $z$ in an approximate disc $D$, centred at $\zeta$ with radius $\sim |v_{62}|R$ has the properties that $v_{62}(z)/v_{62}(\zeta) \sim 1$ and $z\rightarrow u_{62}(z)$ is a complex analytic diffeomorphism from $D$ onto an approximate disc centred at $u_{62}(\zeta)$ with radius $\sim R$. If $R$ is chosen to be sufficiently large, we have $0\in u_{62}(D)$, that is, the solution to the Boutroux-Painlev\'e system has a pole at a unique point in $D$ (as $u_{62}=0$ corresponds to a pole with residue $-1$). We may shift the centre of the approximate disc so that $u_{62}(\zeta)=0$, that is, shift the disc to be centred at the pole point.

Provided $|z-\zeta |\ll |\zeta|$, we have $d(z)/d(\zeta)=w_{62}(z)/w_{62}(\zeta) \sim 1$, that is, $w_{62}(z)/\delta \sim1$ and so $u_{62}(z) \sim 2^{-2} \delta^{-1}(z-\zeta)$. Then for given $R_5\in\mathbb{R}_{>0}$, the equation $|u_{62}(z)|=R_5$ corresponds to $|z-\zeta|\sim 2^2|\delta |R_5$, which is still small compared to $|\zeta |$ if $|\delta |$ is sufficiently small. It follows the connected component of $D_5$ of the set of all $z\in\mathbb{C}$ such that $|u_{62}(z)|\leq R_5$ is an approximate disc with centre at $\zeta$ and radius $|\delta |R_5$, more precisely, $z\mapsto u_{62}(z)$ is a complex analytic diffeomorphism from $D_5$ onto $\{ u_{62}\in\mathbb{C} \big\| |u_{62}|\leq R_5 \}$, and that $d(z)/\delta\sim 1$ for all $z\in D_5$. 

$E$ has a pole at $z=\zeta$, but it follows from the relation \eqref{Ew62eqn} that $Ew_{62}\sim 1$ when $1\gg |z^{-1}u_{62}^{-1}|\sim |\zeta^{-1}\delta(z-\zeta)^{-1}|$, that is, when $|z-\zeta|\gg |\delta|/|\zeta|$. As the approximate radius of $D_5$ is $|\delta|R_5\gg|\delta|/|\zeta|$ as $R_5\gg1/|\zeta|$, we have $Ew_{62}\sim 1$ for $z\in D_5\backslash D_6$, where $D_6$ is a disc centred at $\zeta$ with small radius compared to the radius of $D_5$.

The set $L_{4}^{(6)}\backslash L_5^{(5)}$ is visible in the coordinate system $(u_{52},v_{52})$, where it corresponds to the equation $v_{52}=0$ and is parametrised by $u_{52}\in\mathbb{C}$. The set $L_5^{(5)}$ minus one point corresponds to $u_{52}=0$ and is parametrised by $v_{52}\in\mathbb{C}$. The equations that express $(u_{51},v_{51})$ and $(u_{52},v_{52})$ in terms of $(u_{41},v_{41})$ show
\begin{subequations}
\begin{eqnarray}
u_{52} & = & u_{51}v_{51}   \\
v_{52} & = & u_{51}^{-1} \label{coord52}  \\
u_{51}&=&u_{62}+(1-2\alpha)/(12z),
\end{eqnarray}
\end{subequations}
which implies that $u_{62}\rightarrow \infty$ if and only if $v_{52}\rightarrow 0$. That is, when the point near $L_5^{(5)}$ approaches the intersection point with $L_4^{(6)}$, then $Ew_{62}\rightarrow 1$.

As remarked earlier, analogous arguments to the above case can be made where  the solution of the Boutroux-Painlev\'e system is close to $L_{8}^{(1)}\backslash L_{7}^{(2)}$. This completes the proof of the lemma.
\end{proof}

\begin{lemma}\label{lemma3}
For large finite $R_{4} \in \mathbb{R}_{>0}$, the connected component of $\zeta\in\mathbb{C}$ of the set of all $z\in\mathbb{C}$ such that the solution at the complex time z is close to $L_4^{(5)}\backslash L_{3}^{(6)}$, with $|u_{52}(z)|\leq R_4$, but not close to $L_{5}^{(4)}$, is the complement of $D_{5}$ in an approximate disk $D_4$ with centre at $\zeta$ and radius $\sim |\delta R_4|^{1/2}$. For all $z\in D_4$, the largest approximate disc, we have $|z-\zeta | \ll |\zeta|$ and $d(z)/\delta\sim 1$

The analogous statement holds true in the 7-th and 8-th coordinate charts.
\end{lemma}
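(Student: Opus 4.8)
The plan is to pick up the trajectory analysis of Lemma \ref{lemma2} at the boundary of the pole disc $D_5$ and continue it outward into the coordinate chart $(u_{52},v_{52})$ covering $L_4^{(5)}\backslash L_3^{(6)}$, where the line itself is $\{v_{52}=0\}$. On this part of the infinity set the distance function is $d=1/E$, and Table \ref{asymtable} supplies the leading-order data $\dot u_{52}\sim -2v_{52}^{-1}$, $w_{62}\sim -u_{52}v_{52}^{2}/4$ and $Ew_{62}\sim 1$, so that $d\sim w_{62}\sim -u_{52}v_{52}^{2}/4$. The whole argument reduces to two steps: first, showing that $d$ stays frozen at the value $\delta=w_{62}(\zeta)$ throughout the region; and second, converting the resulting approximate first integral $u_{52}v_{52}^{2}\sim -4\delta$ into a scalar equation for $u_{52}(z)$ whose solution exhibits the claimed geometry.

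For the first step I would integrate $\dot d/d=-\dot E/E\sim 4/(3z)-(1-2\alpha)/(12u_{52}z)$ along the solution, starting from $\partial D_5$, where $d\sim\delta$ already holds by Lemma \ref{lemma2}, and running a continuation argument of the same type used to define $\rho$ in Theorem \ref{theorepel}. The dominant term contributes the factor $(z/\zeta)^{4/3}$, which is $\sim 1$ as soon as one knows $|z-\zeta|\ll|\zeta|$; the correction term $-(1-2\alpha)/(12u_{52}z)$ is what must be controlled, and this is where I expect the main obstacle. The difficulty is that $|u_{52}|$ is smallest, so $|1/u_{52}|$ is largest, precisely on the inner edge of the region where it matches $\partial D_5$; using the leading behaviour $u_{52}\sim -(z-\zeta)^{2}/(4\delta)$ obtained below, the integral of the correction is $O\!\left(\delta/(\zeta(z-\zeta))\right)$, which is small at the inner edge $|z-\zeta|\sim|\delta|R_5$ because $R_5$ is large, and small at the outer edge because $\delta$ is small. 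Hence $d(z)/\delta\sim 1$ on the whole region, and $u_{52}v_{52}^{2}\sim -4\delta$ becomes a genuine approximate first integral of the flow.

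With the first integral available I would substitute $v_{52}\sim 2(-\delta/u_{52})^{1/2}$ into $\dot u_{52}\sim -2v_{52}^{-1}$ to obtain the autonomous scalar equation $\dot u_{52}\sim -(-u_{52}/\delta)^{1/2}$, and integrate it to $u_{52}(z)\sim -(z-\zeta)^{2}/(4\delta)$. The integration constant is fixed not at the pole but by matching to Lemma \ref{lemma2} in the overlap region $|u_{62}|\sim R_5$, via the chart-transition relation $u_{52}\sim -4\delta\,u_{62}^{2}$ together with $u_{62}(z)\sim (4\delta)^{-1}(z-\zeta)$; this pins the centre of the region at the pole $\zeta$. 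From $u_{52}(z)\sim -(z-\zeta)^{2}/(4\delta)$ the stated geometry is immediate: the condition $|u_{52}(z)|\le R_4$ is equivalent to $|z-\zeta|\lesssim 2|\delta R_4|^{1/2}$, giving the approximate disc $D_4$ of radius $\sim|\delta R_4|^{1/2}$, while requiring the solution to be close to $L_4^{(5)}$ and away from $L_5^{(4)}$ (that is, $|u_{62}|\ge R_5$) removes exactly the inner disc $D_5$, so the connected component of $\zeta$ is the annulus $D_4\backslash D_5$. Because $R_4$ is a fixed constant and $|\zeta|$ is bounded away from $0$, the outer radius $|\delta R_4|^{1/2}\to 0$ as $\delta\to 0$, so $|z-\zeta|\ll|\zeta|$ holds throughout $D_4$, which both justifies the $(z/\zeta)^{4/3}\sim 1$ used above and delivers $d(z)/\delta\sim 1$ everywhere on $D_4$.

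Finally, the statement in the $7$th and $8$th charts near $L_7^{(2)}\backslash L_0^{(9)}$, where the distance function is $w_{91}$ and the relevant coordinates are $(u_{81},v_{81})$, follows from the identical computation. By the remark following Table \ref{asymtable} the asymptotic data near $L_4^{(5)}$ and $L_7^{(2)}$ coincide to leading order up to multiplicative constants, so the roles of $(u_{52},v_{52},w_{62})$ are taken by $(v_{81},u_{81},w_{91})$; the same integration then produces an annulus of the same shape and outer radius $\sim|\delta R_4|^{1/2}$, completing the proof.
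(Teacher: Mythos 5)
Your argument is correct and follows the paper's proof essentially step for step: freeze $d$ at $\delta$ by integrating $\dot E/E\sim -4/(3z)+O(1/(u_{52}z))$ from a base point on $\partial D_5$, extract the approximate first integral $u_{52}v_{52}^{2}\sim\mathrm{const}\cdot\delta$ from $Ew\sim 1$, integrate $\frac{d}{dz}\bigl(u_{52}^{1/2}\bigr)\sim -i\,\delta^{-1/2}$ to get $u_{52}(z)\sim -\delta^{-1}(z-\zeta)^{2}$ up to a constant, and read off the annulus $D_4\backslash D_5$ of outer radius $\sim|\delta R_4|^{1/2}$ with $|z-\zeta|\ll|\zeta|$, handling the $7$th/$8$th charts by the leading-order symmetry. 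The only real divergence is in controlling the correction term in $\dot E/E$: you bound its integral by bootstrapping against the quadratic law for $u_{52}$ (a legitimate continuation argument, which you flag), whereas the paper observes that the term is $\propto\dot v_{52}/z$ and integrates by parts, so that only the boundary values $v_{52}/z$ --- already known to be small since $|v_{52}|\lesssim R_5^{-1}$ on $\partial D_5$ and $v_{52}\to 0$ in this chart --- need to be estimated, avoiding the circularity altogether.
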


\begin{proof}
Asymptotically as $v_{52}\rightarrow 0$, and for bounded $u_{52}$ and $z^{-1}$, we have
\begin{subequations}
\begin{eqnarray}
\dot{u}_{52} & \sim & -2v_{52}^{-1}   \label{v52eqns1}\\
\dot{v}_{52} & \sim & u_{52}^{-1}   \\
w_{52}&\sim& -u_{52}v_{52}^2  \\
Ew_{52}&\sim& 1 \\
\dot{E}/E &\sim& -4/3z +(1-2\alpha)/(3u_{52}z) \sim -4/3z +(1-2\alpha)/(3z)\dot{v}_{52}\, \label{v52eqns2}. 
\end{eqnarray}
\end{subequations}
So 
\begin{eqnarray}
\nonumber &\log(E(z_1)/E(z_0))\sim \log(z_1/z_0)^{-4/3}+(1-2\alpha)/3(v_{52}(z_1)/z_1-v_{52}(z_0)/z_0)\\
&\quad +\int_{z_0}^{z_1} z^{-2}v_{52}(z) dz ,
\end{eqnarray}
and hence
\begin{displaymath}
E(z_1)/E(z_0)\sim (z_1/z_0)^{-4/3}(1+o(1)) 
\end{displaymath}
It follows that $E(z_1)/E(z_0)\sim 1$ if for all $z$ on the segment from $z_0$ to $z_1$ we have $|z/z_0|\sim 1$, that is, $|z-z_0|\ll |z_0|$ and $|v_{52}|\ll |z_0|$.

We chose $z_0$ on the boundary of $D_5$, when $\delta/d(z_0)\sim E(z_0)\delta\sim E(z_0)w_{62}(z_0)\sim 1$, and 

\begin{eqnarray}
|u_{62}(z_0)|=R_5& \implies & |\frac{2\alpha-1}{3z}+\frac{1}{v_{52}}|=R_5 \\
 & \implies & |v_{52}(z_0)| \sim R_5^{-1} \ll 1\,. 
\end{eqnarray}
Furthermore

\begin{displaymath}
|u_{52}(z_0)|\sim |w_{52}v_{52}^{-2}|\sim |\delta| R_5^{-2} 
\end{displaymath}
which is small when $\delta$ is sufficiently small. If we have $z=\zeta+r(z_0-\zeta)$, $r\geq1$ we have $|u_{62}(z)|\geq R_5 \gg 1$, and hence $|v_{52}|\ll1$ from \eqref{coord52}. We also have $|z-z_0|/|z_0|=(r-1)|1-\zeta/z_0|\ll1$ if $r-1$ is small. 

Now (\ref{v52eqns1})-(\ref{v52eqns2}) and $E\sim \delta^{-1}$ give $v_{52}^{-1}\sim (-u_{52}\delta^{-1})^{1/2}$, and so 

\begin{subequations}
\begin{eqnarray}
\frac{d(u_{52}^{1/2})}{dz} & \sim & -i\delta^{-1/2}  \\
\implies  u_{52}^{1/2}(z) & \sim & u_{52}^{1/2}(z_0) -i\delta^{-1/2}(z-z_0)   \\
u_{52}(z)&\sim&-\delta^{-1}(z-z_0)^2 
\end{eqnarray}
\end{subequations}
if $|\delta^{-1/2}||z-z_0|\gg u_{52}(z_0)^{1/2}$.

For large $R_4\in\mathbb{R}_{>0}$, the equation $|u_{52}(z)|=R_4$ corresponds to $|z-z_0|\sim |\delta R_4 |^{1/2}$, which is small compared to $|z_0|\sim |\zeta|$, and so
\begin{displaymath}
|z-\zeta|\leq|z-z_0|+|z_0-\zeta|\ll |\zeta|\, .
\end{displaymath}
\end{proof}
We are now in a position to prove Theorem \ref{theorepel}.
\begin{proof}[Proof of Theorem \ref{theorepel}]
Suppose a solution of the Boutroux-Painlev\'e system is near the set $I_3$ at times $z_0$ and $z_1$. It follows from Lemma \ref{lemma3} that we have for every solution close to $I$, the set of complex times $z$ such that the solution is not close to $I_3$ is the union of approximate discs of radius $\sim |d|^{1/2}$ contained within approximate discs of radius $\sim |d|^{1/3}$. It also follows from Lemma \ref{lemma3} that the annular region between these discs is at least of order $|d|^{1/3} - |d|^{1/2} \sim |d|^{1/3}$, as  $|d|^{1/3} \gg |d|^{1/2}$.

Hence there if the solution is near $I$ for all complex times $z$ such that $|z_0|\leq |z|\leq |z_1|$, then there exists a path $\gamma$ from $z_0$ to $z_1$ such that the solution is close to $I_3$ \emph{for all} $z\in \gamma$, and $\gamma$ is $\mathbf{C}^1$ close to the path $\left[0,1\right]\ni t \mapsto z_1^t/z_0^{1-t}$.

Then Lemma \ref{lemma1} implies that 

\begin{align}
\nonumber {\rm log}(E(z)/E(z_0))&=-\dfrac{4}{3}{\rm log}(z/z_0)\int_0^1 dt +o(1) \\
\Rightarrow\ E(z)&=E(z_0)(z/z_0)^{-4/3(1+o(1))}(1+o(1))  
\end{align}
and so we have

\begin{align}\label{dzcor1}
d(z)&=d(z_0)(z/z_0)^{4/3(1+o(1))}(1+o(1))\,.
\end{align} 
Then Lemma \ref{lemma2} implies that, as long as we are close to $I$, as the solution moves into the region where $d$ is given by one of the two Jacobians $w_{62}$ or $w_{92}$, the ratio of $d$ remains close to 1.

For the first statement of the theorem, we have 

\begin{align*}
   \delta \,> & \,d(z)\, \geq d(z_0)\left(\frac{z}{z_0}\right)^{4/3-\epsilon_2}(1-\epsilon_3) 
   \end{align*}
   and so
   \begin{align*}
   \delta \, \geq & \sup_{z|\,|d(z)|<\delta}  d(z_0)\left(\frac{z}{z_0}\right)^{4/3-\epsilon_2}(1-\epsilon_3) \,.
\end{align*}
The second statement follows directly from \eqref{dzcor1}, while the third follows by the assumption on $z$.
\end{proof}

\section{Conclusion}\label{conc}

In this paper we have constructed Okamoto's space of initial conditions for the Boutroux-Painlev\'e system describing the behaviour of the second and thirty-fourth Painlev\'e equations in the asymptotic limit where the independent variable goes to infinity. Not treated here, but of interest, is the case where the parameter $\alpha$ of $\Ptw$ goes to infinity, cf. \cite{joshi:99,kaw:05}. From our explicit construction, we are able to conclude that each respective limit set of solutions to these equations forms a compact, connected and non-empty set of the space $S_9(\infty) \backslash I(\infty)$, which is the elliptic surface corresponding to the space of initial conditions for the autonomous limit system. We also showed that all solutions except those which vanish uniformly at infinity necessarily have infinitely many poles. Moreover, the special solutions to $\Ptw$ were found at singular values of the energy function $E$, where the branch points of the elliptic curve coalesce.

\appendix 
\section{Resolution of singularities for (\ref{boutrouxsys})-(\ref{boutrouxsys2})} \label{app1}
In this appendix, we construct Okamoto's space of initial conditions for the Boutroux-Painlev\'e system explicitly. Recall the notation from Section \ref{setup}, where the coordinates $(u_{ij},v_{ij})$ refer to the two coordinates in the $j$-th patch of the $i$-th blow up, $j=1,2$, $i=0,1,...,9$. 

The resolution of the Boutroux-Painlev\'e system can be seen in Figure \ref{blowupfig}, and can be summarised by the following diagram, where we omit the coordinate charts which are free from base points:

\begin{align*}
(u_{02},v_{02})&= (u/v,1/v)\xleftarrow{(0,0)} (u_{12},v_{12})\xleftarrow{(0,0)} (u_{21},v_{21})    \\
   &\xleftarrow{(1/2,0)} (u_{31},v_{31}) \xleftarrow{(0,0)} (u_{41},v_{41}) \xleftarrow{(-1/4,0)}  (u_{51},v_{51}) \xleftarrow{(\frac{1-2\alpha}{12z},0)}  (u_{61},v_{61}),\\
   (u_{01},v_{01}) &= (1/u,v/u)  
   \xleftarrow{(0,0)} (u_{72},v_{72}) \xleftarrow{(0,0)} (u_{82},v_{82}) \xleftarrow{(0,\frac{1+2\alpha}{3z})}  (u_{91},v_{91}).\\ 
\end{align*}
Here the label above each arrow represents the base point that is blown up in the preceding coordinate chart.

\begin{center}
\begin{figure}[h!]

\scalebox{.9} 
{
\begin{pspicture}(0,-4.07)(16.06291,4.07)
\psline[linewidth=0.04cm](2.3810155,1.67)(0.78101563,-0.93)
\psline[linewidth=0.04cm](0.58101565,-0.73)(3.9810157,-0.73)
\psline[linewidth=0.04cm](2.1810157,1.67)(3.7810156,-0.93)
\psline[linewidth=0.04cm](5.9610157,3.45)(5.9610157,-2.55)
\psline[linewidth=0.04cm](5.5810156,-2.25)(9.621016,-2.25)
\psline[linewidth=0.04cm](8.741015,-2.45)(11.1610155,-1.47)
\psline[linewidth=0.04cm](10.801016,-1.79)(11.781015,-0.09)
\psline[linewidth=0.04cm](11.561016,-0.75)(11.561016,2.25)
\psline[linewidth=0.04cm](11.961016,1.65)(9.761016,3.65)
\psline[linewidth=0.04cm](5.5610156,3.25)(10.561016,3.25)
\psline[linewidth=0.04cm](10.382354,2.25)(12.168049,4.05)
\psline[linewidth=0.04cm](11.771228,4.05)(13.358512,2.45)
\psline[linewidth=0.04cm](12.961016,2.45)(14.361015,3.85)
\psline[linewidth=0.04cm,linestyle=dashed,dash=0.16cm 0.16cm](13.953743,3.85)(15.541027,2.05)
\psline[linewidth=0.04cm,linestyle=dashed,dash=0.16cm 0.16cm](9.961016,-1.49)(10.941015,-4.05)
\usefont{T1}{ptm}{m}{n}
\rput(0.2324707,-0.705){$u$}
\usefont{T1}{ptm}{m}{n}
\rput(0.7124707,-1.145){$v$}
\usefont{T1}{ptm}{m}{n}
\rput(4.392471,-0.745){$1/u$}
\usefont{T1}{ptm}{m}{n}
\rput(2.5324707,1.935){$1/v$}
\usefont{T1}{ptm}{m}{n}
\rput(1.8724707,1.955){$u/v$}
\usefont{T1}{ptm}{m}{n}
\rput(3.8324707,-1.165){$v/u$}
\usefont{T1}{ptm}{m}{n}
\rput(12.122471,0.775){$L_0^{(9)}$}
\usefont{T1}{ptm}{m}{n}
\rput(8.082471,2.935){$L_1^{(8)}$}
\usefont{T1}{ptm}{m}{n}
\rput(10.202471,2.675){$L_2^{(7)}$}
\usefont{T1}{ptm}{m}{n}
\rput(11.22247,3.535){$L_3^{(6)}$}
\usefont{T1}{ptm}{m}{n}
\rput(12.862471,3.475){$L_4^{(5)}$}
\usefont{T1}{ptm}{m}{n}
\rput(13.942471,2.895){$L_5^{(4)}$}
\usefont{T1}{ptm}{m}{n}
\rput(15.082471,3.215){$L_6^{(3)}$}
\usefont{T1}{ptm}{m}{n}
\rput(10.882471,-0.945){$L_7^{(2)}$}
\usefont{T1}{ptm}{m}{n}
\rput(9.422471,-1.825){$L_8^{(1)}$}
\usefont{T1}{ptm}{m}{n}
\rput(10.952471,-2.845){$L_9$}
\usefont{T1}{ptm}{m}{n}
\rput(6.7224708,0.535){$L:u=0$}
\usefont{T1}{ptm}{m}{n}
\rput(7.2824707,-2.445){$L:v=0$}
\psline[linewidth=0.04cm,arrowsize=0.05291667cm 2.0,arrowlength=1.4,arrowinset=0.4]{<-}(3.9610157,0.85)(5.5610156,0.85)
\usefont{T1}{ptm}{m}{n}
\rput(3.4524708,0.175){$L_0$}
\usefont{T1}{ptm}{m}{n}
\rput(2.2079492,-1.045){\color{red}1}
\usefont{T1}{ptm}{m}{n}
\rput(1.2079492,0.555){\color{red}1}
\usefont{T1}{ptm}{m}{n}
\rput(3.2079492,0.555){\color{red}1}
\usefont{T1}{ptm}{m}{n}
\rput(7.481748,-2.045){\color{red}-1}
\usefont{T1}{ptm}{m}{n}
\rput(5.638096,0.155){\color{red}0}
\usefont{T1}{ptm}{m}{n}
\rput(8.29625,3.555){\color{red}-2}
\usefont{T1}{ptm}{m}{n}
\rput(11.29625,2.555){\color{red}-2}
\usefont{T1}{ptm}{m}{n}
\rput(11.49625,3.155){\color{red}-2}
\usefont{T1}{ptm}{m}{n}
\rput(12.49625,2.955){\color{red}-2}
\usefont{T1}{ptm}{m}{n}
\rput(13.49625,3.355){\color{red}-2}
\usefont{T1}{ptm}{m}{n}
\rput(11.29625,0.755){\color{red}-2}
\usefont{T1}{ptm}{m}{n}
\rput(11.49625,-1.245){\color{red}-2}
\usefont{T1}{ptm}{m}{n}
\rput(10.49625,-1.445){\color{red}-2}
\usefont{T1}{ptm}{m}{n}
\rput(11.081748,-3.445){\color{red}-1}
\usefont{T1}{ptm}{m}{n}
\rput(15.481748,2.555){\color{red}-1}
\usefont{T1}{ptm}{m}{n}
\rput(4.771875,1.155){Blow up}
\end{pspicture} 
}

\caption{The 9 point blow up of $\mathbb{P}^2(\mathbb{C})$ showing the configuration of the exceptional curves. The numbers represent the self intersection of the lines they are adjacent to. The configuration of the irreducible divisors (the infinity set) is that of the root lattice $E_7^{(1)}$ (see Figure \ref{E71}). The dashed lines indicating $L_6^{(3)}$ and $L_9$ are the poles lines, where the vector field is transversal to the line and a crossing indicates a pole of residue $\pm 1$ for $u$.}
\label{blowupfig} 
\end{figure}
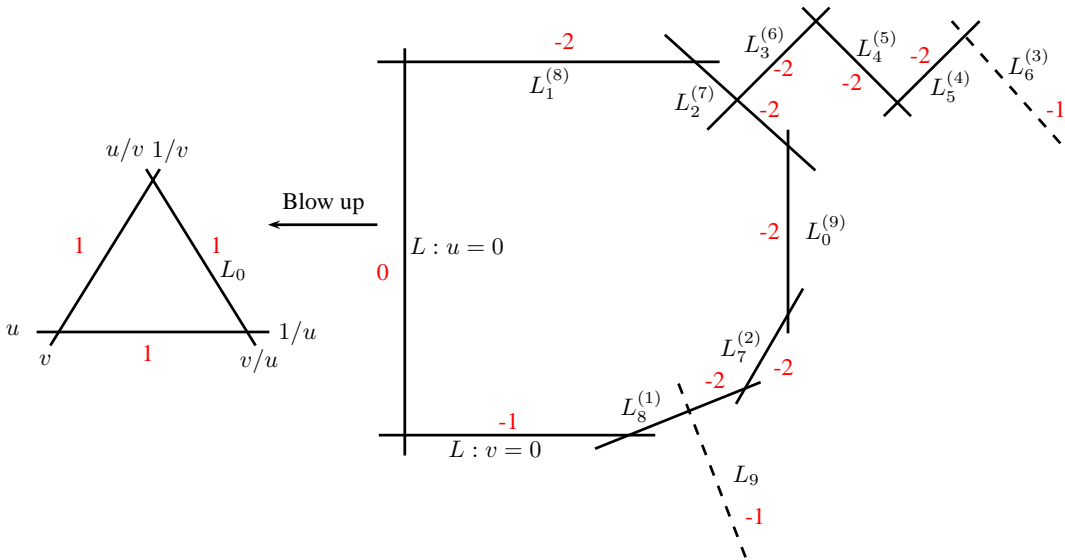
\end{center}

\begin{figure}
  \begin{center}
    \includegraphics{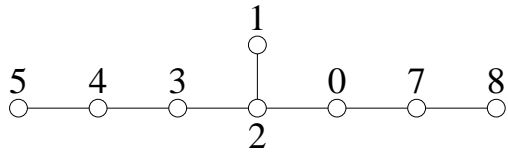}
  \end{center}
  \caption{\label{E71} 
           The Dynkin diagram for $E_{7}^{(1)}$, the numbers $i$ indicate the line $L_i$ which gives rise to the node. The nodes $j$ and $k$ are connected when $L_j^{(9-j)}$ intersects $L_k^{(9-k)}$.}
\end{figure}

\begin{remark}
The following blow up calculations are provided in explicit detail for completeness. The essential information for proofs in the body of the paper can be found in equations (\ref{firstr})-(\ref{lastr}) and Table \ref{asymtable}.
\end{remark}

\subsection{Embedding in $\mathbb{P}^2$}

In the second affine chart:
\begin{eqnarray*}
u_{01} & = & u^{-1} \\
v_{01} & = & vu^{-1}\\
\dot{u}_{01} &=&\frac{3 (u_{01}^{2}+2) z-2 (3 v_{01} z-1) u_{01}}
{6 z}\\
\dot{v}_{01} &=& \frac{2 (2 \alpha u_{01}^{2}+9 v_{01} z-(3 v_{01} z+1) u_{01} v_{01}
)+(3 v_{01} z+2) u_{01}^{2}}{6 u_{01} z}\\
E_{01}&=&\frac{-((u_{01}-v_{01}) u_{01}+2) v_{01}}{2 u_{01}^{3
}}\\
w_{01}&=&-u_{01}^{3}
\end{eqnarray*}

The line at infinity $L_0$ is given by $u_{01}=0$. There is a base point in this chart $b_7$ given by $u_{01}=0$, $v_{01}=0$ which is a base point of both the Painlev\'e vector field and of the autonomous energy function $E$.
In the third affine chart:
\begin{align*}
u_{02} & =  uv^{-1} \\
v_{02} & =  v^{-1}\\
\dot{u}_{02} &=\frac{-(3 ((v_{02}-2) v_{02}+6 u_{02}^{2}) z+4 \alpha 
u_{02} v_{02}^{2}+2 (v_{02}-1) u_{02} v_{02})}{6 v_{02} z
}\\
\dot{v}_{02} &=\frac{-((v_{02}-2) v_{02}+6 u_{02} z+2 \alpha v_{02}^{2})
}{3 z}\\
E_{02}&=\frac{-((v_{02}-1) v_{02}+2 u_{02}^{2})}{2 v_{02}
^{3}}\\
w_{02}&=-v_{02}^{3}
\end{align*}

The line at infinity $L_0$ is given by $v_{02}=0$. The vector field has a base point $b_0$ given by $u_{02}=0$, $v_{02}=0$, which is a base point of both the Painlev\'e vector field and of $E$.

\subsection{Resolution of $b_0$}

We blow up $\mathbb{P}^2$ at $b_0$. In the first chart:

\begin{eqnarray*}
u_{02} & = & u_{11} v_{11} \\
v_{02} & = & v_{11}\\
u&=&u_{11}\\
v&=&v_{11}^{-1}\\
\dot{u}_{11}&=&\frac{-(2 (3 u_{11} z+1) u_{11} v_{11}+3 (v_{11}-2) z
)}{6 v_{11} z}\\
\dot{v}_{11}&=&\frac{-(v_{11}-2+6 u_{11} z+2 \alpha v_{11}) v_{11}}{3 z} \\
w_{11}&=&-v_{11}^2\\
E_{11}&=&\frac{-(v_{11}-1+2 u_{11}^{2} v_{11})}{2 v_{11}^{2}}
\end{eqnarray*}

$v_{11}=0$ defines $L_1$ and the line $L_0^{(1)}$ is not visible in this chart. There are no base points in this chart.

In the second chart:

\begin{eqnarray*}
u_{02} & = & u_{12} \\
v_{02} & = & u_{12}v_{12} \\
u&=& v_{12}^{-1}\\
v&=& u_{12}^{-1} v_{12}^{-1}\\
\dot{u}_{12}&=&\frac{-(2 (u_{12}^{2} v_{12}-3 z+2 \alpha u_{12}^{2} v_{12}) v_{12}+
(3 v_{12}^{2} z-2 v_{12}+18 z) u_{12})}{6 v_{12} z}\\
\dot{v}_{12}&=&\frac{(3 v_{12}^{2} z+2 v_{12}+6 z) u_{12}-6 v_{12} z}{6 u_{12} 
z}\\
w_{12}&=&-u_{12}^{2} v_{12}^{3}\\
E_{12}&=&\frac{-u_{12} v_{12}^{2}-2 u_{12}+v_{12}}{2 u_{12}^{2} v_{12}^{3}}
\end{eqnarray*}

$u_{12}=0$ defines $L_1$ in this chart, and $v_{12}=0$ defines the lift of $L_0$=$L_0^{(1)}$. There is a base point $b_1$ of the vector field and of $E$ at $u_{12}=0$, $v_{12}=0$. 

\subsection{Resolution of $b_1$}

In the first chart:

\begin{eqnarray*}
u_{12} & = & u_{21}v_{21} \\
v_{12} & = & v_{21} \\
u&=&v_{21}^{-1}\\
v&=&u_{21}^{-1} v_{21}^{-2}\\
\dot{u}_{21}&=&\frac{-(u_{21}^{2} v_{21}^{3}-6 z+2 \alpha u_{21}^{2} v_{21}^{3}+3 (v_{21}^{2}
+4) u_{21} z)}{3 v_{21} z}\\
\dot{v}_{21}&=&\frac{(3 v_{21}^{2} z+2 v_{21}+6 z) u_{21}-6 z}{6 u_{21} z}\\
w_{21}&=&-u_{21}^{2} v_{21}^{4}\\
E_{21}&=&\frac{-(v_{21}^{2}+2) u_{21}+1}{2 u_{21}^{2} v_{21}^{4}}
\end{eqnarray*}

Here, $v_{21}=0$ defines $L_2$ and $u_{21}=0$ defines the proper transform $L_1^{(1)}$ of $L_1$. The proper transform $L_0^{(2)}$ of $L_0^{(1)}$ is not visible in this chart. There is a base point $b_2$ of both the vector field and the function $E$ given by $u_{21}=1/2$, $v_{21}=0$.

\begin{eqnarray*}
u_{12} & = & u_{22} \\
v_{12} & = & u_{22}v_{22} \\
u&=&u_{22}^{-1} v_{22}^{-1} \\
 v&=&u_{22}^{-2} v_{22}^{-1}\\
\dot{u}_{22}&=&\frac{-(2 u_{22}^{2} v_{22}+3 u_{22} v_{22} z-2) u_{22} v_{22}+6 (
v_{22}-3) z-4 \alpha u_{22}^{3} v_{22}^{2})}{6 v_{22} z}\\
 \dot{v}_{22}&=&\frac{(u_{22}+3 z) u_{22}^{2} v_{22}^{2}-6 (v_{22}-2) z+2 
\alpha u_{22}^{3} v_{22}^{2}}{3 u_{22} z}\\
w_{22}&=&-u_{22}^{4} v_{22}^{3}\\
E_{22}&=&\frac{-(u_{22}^{2} v_{22}-1) v_{22}-2}{2 u_{22}^{4} v_{22}^{3}}
\end{eqnarray*}

Here, $u_{22}=0$ defines $L_2$ and $v_{22}=0$ defines the proper transform $L_0^{(2)}$ of $L_0^{(1)}$. The proper transform $L_1^{(1)}$ of $L_1$ is not visible in this chart. There is a base point $b_2$ of both the vector field and the function $E$ given by $u_{22}=0$, $v_{22}=2$. 

\subsection{Resolution of $b_2$}

In the first chart we have

\begin{align*}
u_{21}=& u_{31} v_{31}+1/2 \\
 v_{21}=&v_{31}\\
 u=&v_{31}^{-1}\\ 
  v=&2/(2 u_{31} v_{31}^{3}+v_{31}^{2})\\
  \dot{u}_{31}=&-(12 (2 u_{31} v_{31}+1) v_{31} z)^{-1}\\
  \times &((8 u_{31}^{3} v_{31}^{4}+12 u_{31}^{2} v_{31}^{3}+36 u_{31}^{2} v_{31}^{
2} z+8 u_{31}^{2} v_{31}+120 u_{31}^{2} z+v_{31}+6 z) v_{31}\\
&+2 (3 v_{31}^{3}+15 v_{31}^{2} z+2 v_{31}+18 z) u_{31}
 +2 (2 u_{31} v_{31}+1)^{3} \alpha 
v_{31}^{2})\\
\dot{v}_{31}=&(6z (2 u_{31} v_{31}+1) )^{-1}((6 v_{31}^{2} z+4 v_{31}+12 z) u_{31} v_{31}+3 v_{31}^{2} z+2 v_{31}-6 z)\\
E=&(2u31 v31^{2}+4 u31+v31)^{-1}(v31^{3} (-4 u31^{2} v31^{2}-4 u31 v31-1))\\
w_{31}=&-2^{-2}(2 u_{31} v_{31}+1)^{2} v_{31}^{3}
\end{align*}

$v_{31}=0$ defines $L_3$, and $u_{31}v_{31}+1/2=0$ defines $L_1^{(2)}$, the proper transform of $L_1^{(1)}$. There is a base point $b_3$ in this chart given by $u_{31}=0$, $v_{31}=0$.

In the second chart:

\begin{align*}
u_{21}=&u_{32}+1/2\\
v_{21}=&u_{32} v_{32}\\
u=&u_{32}^{-1} v_{32}^{-1} \\
v=&2/(2 u_{32}^{3} v_{32}^{2}+u_{32}^{2} v_{32}^{2}) \\
\dot{u}_{32}=& -(12 v_{32} z)^{-1}\\
&\times(2 (2 u_{32}^{4} v_{32}^{3}+2 u_{32}^{3} v_{32}^{3}+3 u_{32} v_{32}^{2} z
+24 z)+(v_{32}+12 z) u_{32}^{2} v_{32}^{2}+2 (2 u_{32}+1
)^{2} \alpha u_{32}^{2} v_{32}^{3})			\\
\dot{v}_{32}=&(12 (2 u_{32}+1) u_{32} z)^{-1}\\
&\times (2 (2 (2 u_{32}^{5} v_{32}^{3}+3 u_{32}^{4} v_{32}^{3}+9 z)+3 
(v_{32}+6 z) u_{32}^{3} v_{32}^{2})+(v_{32}^{2}+30 v_{32} z+8
) u_{32}^{2} v_{32}\\
&+2 (3 v_{32}^{2} z+2 v_{32}+60 z) u_{32}
+2 (2 u_{32}+1)^{3} \alpha u_{32}^{2} v_{32}^{3})			\\
E=&	-((2 u_{32}+1)^{2} u_{32}^{3} v_{32}^{4})^{-1}(2 u_{32}^{2} v_{32}^{2}+u_{32} v_{32}^{2}+4))		\\
w_{32}=& -2^{-2}(2 u_{32}+1)^{2} u_{32}^{3} v_{32}^{4}		
\end{align*}

$u_{32}=0$, $u_{32}+1/2=0$ and $v_{32}=0$ define $L_3$, $L_1^{(2)}$ and $L_2^{(1)}$ respectively. There are no base points in this chart.

\subsection{Resolution of $b_3$}

In the first chart we have

\begin{align*}
u_{31}=&u_{41} v_{41}\\
v_{31}=&v_{41}\\
u=&v_{41}^{-1}\\
 v=& 2/(2 u_{41} v_{41}^{4}+v_{41}^{2})\\
\dot{u}_{41}=&-(12 (2 u_{41} v_{41}^{2}+1) v_{41} z)^{-1}\\
&\times (v_{41}+6 z+8 u_{41}^{3} v_{41}^{7}+4 (3 v_{41}^{3}+12 v_{41}^{2} z+4 
v_{41}+36 z) u_{41}^{2} v_{41}^{2}\\
&+2 (3 v_{41}^{3}+18 v_{41}^{2} z+4 v_{41}+12 
z) u_{41}+2 (2 u_{41} v_{41}^{2}+1)^{3} \alpha v_{41})  			\\
\dot{v}_{41}=&(6z (2 u_{41} v_{41}^{2}+1))^{-1}(2 (3 v_{41}^{2} z+2 v_{41}+6 z) u_{41} v_{41}^{2}+3 v_{41}^{2} z+2 v_{41}
-6 z)			\\
E=&-((2 u_{41} v_{41}^{2}+1)^{2} v_{41}^{2})^{-1}(2 u_{41} v_{41}^{2}+4 u_{41}+1)			\\
w_{41}=&-2^{-2}(2 u_{41} v_{41}^{2}+1)^{2} v_{41}^{2}
\end{align*}

$v_{41}=0$ defines $L_4$, and $u_{41}v_{41}^2+1/2=0$ defines $L_1^{(3)}$. There is a base point $b_4$ in this chart given by $u_{41}=-1/4$, $v_{41}=0$

In the second chart we have

\begin{align*}
u_{31}=&u_{42} \\
v_{31}=&u_{42} v_{42}\\
u=&u_{42}^{-1} v_{42}^{-1} \\
v=&2/(2 u_{42}^{4} v_{42}^{3}+u_{42}^{2} v_{42}^{2})\\
\dot{u}_{42}=&-(12 (2 u_{42}^{2} v_{42}+1) v_{42} z)^{-1}\\
&\times  (2 (2 (2 u_{42}^{3} v_{42}^{2}+3 u_{42} v_{42}+9 z) u_{42}
^{4} v_{42}^{3}+3 (v_{42}+6) z)+(30 u_{42} z+1) 
(v_{42}+4) u_{42} v_{42}\\
&+2 (3 v_{42}+4) u_{42}^{3} v_{42}^{2}
+2 
(2 u_{42}^{2} v_{42}+1)^{3} \alpha u_{42} v_{42}^{2})		\\
\dot{v}_{42}=&(12z (2 u_{42}^{2} v_{42}+1) u_{42} )^{-1}\\
&\times (2 (2 (2 u_{42}^{3} v_{42}^{2}+3 u_{42} v_{42}+12 z) u_{42}^{4} 
v_{42}^{3}+3 (v_{42}+4) z+18 (v_{42}+4) u_{42}^{2} v_{42} z
)\\
&+(v_{42}+8) u_{42} v_{42}+2 (3 v_{42}+8) u_{42}^{3} v_{42}^{2}
+2 (2 u_{42}^{2} v_{42}+1)^{3} \alpha u_{42} v_{42}^{2})		\\
E=&-((2 u_{42}^{2}
 v_{42}+1)^{2} u_{42}^{2} v_{42}^{3})^{-1}(2 u_{42}^{2} v_{42}^{2}+v_{42}+4))		\\
w_{42}=&-2^{-2}(2 u_{42}^{2} v_{42}+1)^{2} u_{42}^{2} v_{42}^{3}
\end{align*}

Here, $u_{42}=0$, $u_{42}^2v_{42}+1/2=0$ and $v_{42}=0$ define $L_4$, $L_1^{(3)}$ and $L_3^{(1)}$ respectively. There are no base points in this chart. There is a base point $b_4'$ in this chart given by $u_{42}=0$, $v_{42}=-4$, which is the manifestation of $b_4$ in this chart.

\subsection{Resolution of $b_4$}

In the first chart we have

\begin{align*}
u_{41}=& u_{51} v_{51}-1/4 \\
v_{41}=&v_{51}\\
u=&v_{51}^{-1}\\
 v=&4/(4u_{51}v_{51}^5-v_{51}^4+2v_{51}^2)   \\
\dot{u}_{51}=&-(48z (v_{51}^{2}-2-4 u_{51} v_{51}^{3}) v_{51} )^{-1}\\
&\times (v_{51}^{6}-6 v_{51}^{4}-24 v_{51}^{3} z+4 v_{51}^{2}+8-64 u_{51}^{3} v_{51}^{9}\\
&+48 (v_{51}^{5}-2 v_{51}^{3}-10 v_{51}^{2} z-4 v_{51}-28 z) u_{51}^{2} v_{51}
^{3}-2 (4 u_{51} v_{51}^{3}-v_{51}^{2}+2)^{3} \alpha\\
&-4 (3 v_{51}^{7}-12 v_{51}^{5}-54 v_{51}^{4} z-8 v_{51}^{3}-72 v_{51}^{2} z+24
 v_{51}+24 z) u_{51})		\\
\dot{v}_{51}=&(6z (v_{51}^{2}-2-4 u_{51} v_{51}^{3}))^{-1}(3 v_{51}^{4} z+2 v_{51}^{3}-4 v_{51}+12 z-4 (3 v_{51}^{2} z+2 v_{51}+6 z
) u_{51} v_{51}^{3})		\\
E=&-2 ((v_{51}^{2}-2-4 u_{51} v_{51}^{3})^{2} v_{51})^{-1} (4 u_{51} v_{51}^{2}+8 u_{51}-v_{51})		\\
w_{51}=&-2^{-4}(4 u_{51} v_{51}^{3}-v_{51}^{2}+2)^{2} v_{51}
\end{align*}

$v_{51}=0$ defines $L_5$, and $u_{51}v_{51}^3-v_{51}^2/4+1/2=0$ defines $L_1^{(4)}$. There is a base point $b_5$ in this chart given by $u_{51}=(1-2\alpha)/(12z)$, $v_{51}=0$

In the second chart we have

\begin{align*}
u_{41}=&u_{52}-1/4\\ 
v_{41}=&u_{52} v_{52}\\
u=&u_{52}^{-1} v_{52}^{-1} \\
 v=&4/(4 u_{52}^{5} v_{52}^{4}-u_{52}^{4} v_{52}^{4}+2
 u_{52}^{2} v_{52}^{2})\\
\dot{u}_{52}=&-(48 (4 u_{52}^{3} v_{52}^{2}-u_{52}^{2} v_{52}^{2}+2) v_{52} z)^{-1}\\
&\times (4 ((16 u_{52}^{8} v_{52}^{6}-12 u_{52}^{7} v_{52}^{6}+3 u_{52}^{
6} v_{52}^{6}+16) u_{52} v_{52}-2 (v_{52}-24 z)-12 (v_{52}-8 z
) u_{52}^{5} v_{52}^{4}\\
&-(v_{52}+72 z) u_{52}^{2} v_{52}^{2})
-(v_{52}^{2}-96) u_{52}^{6} v_{52}^{5}+2 (3 v_{52}^{2}-96 v_{52} z+64
) u_{52}^{4} v_{52}^{3}\\
&+8 (3 v_{52}^{2} z-2 v_{52}+144 z) u_{52}^{3} 
v_{52}^{2}+2 (4 u_{52}^{3} v_{52}^{2}-u_{52}^{2} v_{52}^{2}+2)^{3} \alpha v_{52}
)		\\
\dot{v}_{52}=&(48 (4 u_{52}^{3} v_{52}^{2}-u_{52}^{2} v_{52}^{2}+2) u_{52} z)^{-1}\\
&\times (4 ((16 u_{52}^{8} v_{52}^{6}-12 u_{52}^{7} v_{52}^{6}+3 u_{52}^{6} v_{52}
^{6}+24) u_{52} v_{52}-2 (v_{52}-12 z)-12 (v_{52}-10 z) 
u_{52}^{5} v_{52}^{4}\\
&-(v_{52}+72 z) u_{52}^{2} v_{52}^{2})-(v_{52}^{
2}-96) u_{52}^{6} v_{52}^{5}+6 (v_{52}^{2}-36 v_{52} z+32) u_{52}^{4} 
v_{52}^{3}\\
&+8 (3 v_{52}^{2} z-4 v_{52}+168 z) u_{52}^{3} v_{52}^{2}+2 (4
 u_{52}^{3} v_{52}^{2}-u_{52}^{2} v_{52}^{2}+2)^{3} \alpha v_{52})		\\
E=&-2((4 u_{52}^{3} v_{52}^{2}-u_{52}^{2} v_{52}^{2}+2)^{2} u_{52} v_{52}^{2})^{-1} (4 u_{52}^{2} v_{52}^{2}-u_{52} v_{52}^{2}+8)		\\
w_{52}=&-2^{-4}(4 u_{52}^{3} v_{52}^{2}-u_{52}^{2} v_{52}^{2}+2)^{2} u_{52} v_{52}^{2}
\end{align*}

Here, $u_{52}=0$, $u_{52}^3v_{52}^2-u_{52}^2v_{52}^2/4+1/2=0$ and $v_{52}=0$ define $L_5$, $L_1^{(4)}$ and $L_4^{(1)}$ respectively. There are no base points in this chart. There is a base point $b_5'$ in this chart given by $u_{52}=0$, $v_{52}=(12z)/(1-2\alpha)$, which is the manifestation of $b_5$ in this chart.

\subsection{Resolution of $b_5$}

In the first chart we have

\begin{align*}
u_{51}=& u_{61} v_{61} +(1-2 \alpha)/(12 z)\\
 v_{51}=&v_{61}\\
 u=&v_{61}^{-1}\\
  v=&12 z/((v_{61}^{3}-3 v_{61}^{2} z+6 z+12 u_{61}
 v_{61}^{4} z) v_{61}^{2}-2 \alpha v_{61}^{5})\\
\dot{u}_{61}=&-(432 (v_{61}^{3}-3 v_{61}^{2} z
+6 z+12 u_{61} v_{61}^{4} z-2 \alpha v_{61}^{3}) z^{3})^{-1}\\
&\times (v_{61}^{7}-9 v_{61}^{6} z+27 v_{61}^{5} z^{2}-18 v_{61}^{3} z^{2}-324 
v_{61}^{2} z^{3}-756 z^{3}+1728 u_{61}^{3} v_{61}^{10} z^{3}-16 \alpha^{4} v_{61}^{7}
\\
&-9 (3 z^{2}-2) v_{61}^{4} z +8 (2 v_{61}^{3}-9 v_{61}^{2} z+18 z+36 
u_{61} v_{61}^{4} z) \alpha^{3} v_{61}^{4}\\
&+72 (9 z^{2}+4) v_{61} z^{2}+
432 (v_{61}^{6}-3 v_{61}^{5} z+6 v_{61}^{3} z+36 v_{61}^{2} z^{2}+16 v_{61} z+96 z
^{2}) u_{61}^{2} v_{61}^{3} z^{2}\\
&+36 (v_{61}^{8}-6 v_{61}^{7} z+9 
v_{61}^{6} z^{2}+12 v_{61}^{5} z+30 v_{61}^{4} z^{2}+144 v_{61}^{2} z^{2}-216 v_{61} z
^{3}+96 z^{2}\\
&-4 (45 z^{2}-4) v_{61}^{3} z) u_{61} z\\
&+36 (
v_{61}^{5}-3 v_{61}^{4} z-2 v_{61}^{3}+22 v_{61}^{2} z+16 z-48 u_{61}^{2} v_{61}^{8} z\\
&-4 
(v_{61}^{3}-6 v_{61}^{2} z+12 z) u_{61} v_{61}^{4}) \alpha^{2} v_{61} z\\
&-2 (2 v_{61}^{7}-9 v_{61}^{6} z+180 v_{61}^{3} z^{2}-648 v_{61}^{2} z^{3}+432 
v_{61} z^{2}-324 z^{3}-1728 u_{61}^{3} v_{61}^{10} z^{3}\\
&+9 (3 z^{2}+2)v_{61}^{4} z +1296 (v_{61}^{2}-2) u_{61}^{2} v_{61}^{6} z^{3}\\
&+36 (v_{61}^{6}-9 v_{61}^{4} z^{2}+102 v_{61}^{2} z^{2}+16 v_{61} z+144 z^{2}) u_{61} v_{61}^{2} z) \alpha)		\\
\dot{v}_{61}=&(6 (v_{61}^{3}-3 v_{61}^{2} z+6 z+12 u_{61} v_{61}^{4} z-2 
\alpha v_{61}^{3}) z)	^{-1}\\
&\times(-2 v_{61}^{3} (\alpha-6 u_{61} v_{61} z) (3 v_{61}^{2} z+2 v_{61}+
6 z)+3 (v_{61}^{5}+4 v_{61}-12 z) z-(9 z^{2}-2) 
v_{61}^{4})	\\
E=&(v_{61} (v_{61}^{3}-3 v_{61}^{2} z+6 z+
12 u_{61} v_{61}^{4} z-2 \alpha v_{61}^{3})^{2})^{-1}\\
&\times(6 (2 \alpha v_{61}^{2}+4 \alpha-12 u_{61} v_{61}^{3} z-24 u_{61} v_{61} z-v_{61}
^{2}+3 v_{61} z-2) z)		\\
w_{61}=&-2^{-4}3^{-2} z^{-2}(2 \alpha v_{61}^{3}-12 u_{61} v_{61}^{4} z-v_{61}^{3}+3 v_{61}^{2} z-6 z
)^{2}
\end{align*}

$v_{61}=0$ defines $L_6$, and $u_{61}v_{61}^4-v_{61}^2/4+v_{61}(1-2\alpha)/(12z)+1/2=0$ defines $L_1^{(5)}$. There are no base points in this chart. The vector field is non-zero and transversal to the line $L_6$.

In the second chart

\begin{align*}
u_{51}=& u_{62} +(1-2 \alpha)/(12 z)\\
v_{51}=&u_{62} v_{62}\\
u=&u_{62}^{-1} v_{62}^{-1}\\
v=&(12 z)/((12 u_{62}^{4} 
v_{62}^{3} z+u_{62}^{3} v_{62}^{3}-3 u_{62}^{2} v_{62}^{2} z+6 z) u_{62}^{2} v_{62}^{2}
-2 \alpha u_{62}^{5} v_{62}^{5})\\
\dot{u}_{62}=&-(432 (12 u_{62}^{4} v_{62}^{
3} z+u_{62}^{3} v_{62}^{3}-3 u_{62}^{2} v_{62}^{2} z+6 z-2 \alpha u_{62}^{3} v_{62}^{3}
) v_{62} z^{3})^{-1}\\
&\times(4 (4 (27 (4 u_{62}^{11} v_{62}^{9} z+u_{62}^{10} v_{62}
^{9}+6 z^{2}) z^{2}-\alpha^{4} u_{62}^{8} v_{62}^{9})-81 (v_{62}-
16) u_{62}^{3} v_{62}^{3} z^{3})\\
&+(v_{62}-216 z^{2}) u_{62}^{8
} v_{62}^{8}
+36 (v_{62}-36 z^{2}) u_{62}^{9} v_{62}^{8} z-108 (7 v_{62}
-24) u_{62} v_{62} z^{3}\\
&+27 (v_{62}^{2}+16 v_{62}+480 z^{2}) u_{62}^{6
} v_{62}^{5} z^{2}
-9 (v_{62}^{2}-36 v_{62} z^{2}-288 z^{2}) u_{62}^{7} 
v_{62}^{6} z\\
&-18 (v_{62}^{2}+324 v_{62} z^{2}-24 v_{62}-2016 z^{2}) u_{62}^{4
} v_{62}^{3} z^{2}+9 (96 (v_{62}
+6) z^{2}-(3 z^{2}-2) v_{62}^{2}) u_{62}^{5} v_{62}^{4} z\\
&+8 (36 u_{62}^{4} v_{62}^{3} z+2 u_{62}^{3} v_{62}^{3}-9 u_{62}^{2}
 v_{62}^{2} z+18 z) \alpha^{3} u_{62}^{5} v_{62}^{6}+72 ((9 z^{2}+
4) v_{62}-108 z^{2}) u_{62}^{2} v_{62}^{2} z^{2}\\
&-36 (48 u_{62}^{8} v_{62}^{6} z+4 u_{62}^{7} v_{62}^{6}-24 u_{62}^{6} v_{62}^{5} z-u_{62}
^{5} v_{62}^{5}+2 u_{62}^{3} v_{62}^{3}-22 u_{62}^{2} v_{62}^{2} z-16 z\\
&+3 (v_{62}+16
) u_{62}^{4} v_{62}^{3} z) \alpha^{2} u_{62}^{2} v_{62}^{3} z+2 (2 
(864 u_{62}^{10} v_{62}^{7} z^{3}-u_{62}^{7} v_{62}^{7}-216 u_{62} v_{62} z^{2}+162 z
^{3}\\
&-18 (v_{62}+36 z^{2}) u_{62}^{8} v_{62}^{6} z+108 (3 v_{62}-22
) u_{62}^{2} v_{62} z^{3}-18 (5 v_{62}+12) u_{62}^{3} v_{62}^{2} z^{2}
)\\
&+9 (v_{62}^{2}+36 v_{62} z^{2}+288 z^{2}) u_{62}^{6} v_{62}^{4} z-
9 ((3 z^{2}+2) v_{62}+384 z^{2}) u_{62}^{4} v_{62}^{3} z
) \alpha u_{62} v_{62}^{2})		\\
\dot{v}_{62}=&(432 (12 u_{62}^{4} v_{62}^{3} z+u_{62}^{3} v_{62}^{3}-3 u_{62}^{2
} v_{62}^{2} z+6 z-2 \alpha u_{62}^{3} v_{62}^{3}) z^{3})^{-1}\\
&\times  ((4 (27 (4 (4 u_{62} z+1) u_{62}^{9} v_{62}^{8}-
(7 v_{62}-32) z) z^{2}-4 \alpha^{4} u_{62}^{7} v_{62}^{8}-81 
(v_{62}-16) u_{62}^{2} v_{62}^{2} z^{3})\\
&+(v_{62}-216 z^{2}
) u_{62}^{7} v_{62}^{7}+36 (v_{62}-36 z^{2}) u_{62}^{8} v_{62}^{7} z+
27 (v_{62}^{2}+16 v_{62}+576 z^{2}) u_{62}^{5} v_{62}^{4} z^{2}\\
&-9 (
v_{62}^{2}-36 v_{62} z^{2}-288 z^{2}) u_{62}^{6} v_{62}^{5} z-18 (v_{62}^{2}+360 v_{62} z^{2}-32 v_{62}-2304 z^{2}) u_{62}^{3} v_{62}^{2} z^{2}\\
&+8 (36
 u_{62}^{4} v_{62}^{3} z+2 u_{62}^{3} v_{62}^{3}-9 u_{62}^{2} v_{62}^{2} z+18 z) 
\alpha^{3} u_{62}^{4} v_{62}^{5}\\
&+72 ((9 z^{2}+4) v_{62}-108 z^{2}
) u_{62} v_{62} z^{2}+9 (24 (5 v_{62}+32) z^{2}-(3 z^{
2}-2) v_{62}^{2}) u_{62}^{4} v_{62}^{3} z\\
&-36 (48 u_{62}^{8} v_{62}^{6}
 z+4 u_{62}^{7} v_{62}^{6}-24 u_{62}^{6} v_{62}^{5} z-u_{62}^{5} v_{62}^{5}+2 u_{62}^{3} v_{62}
^{3}-22 u_{62}^{2} v_{62}^{2} z-16 z\\
&+3 (v_{62}+16) u_{62}^{4} v_{62}^{3} z
) \alpha^{2} u_{62} v_{62}^{2} z+2 (2 (864 u_{62}^{10} v_{62}^{7} z^{
3}-u_{62}^{7} v_{62}^{7}-216 u_{62} v_{62} z^{2}+162 z^{3}\\
&+324 (v_{62}-8) 
u_{62}^{2} v_{62} z^{3}-18 (v_{62}+36 z^{2}) u_{62}^{8} v_{62}^{6} z\\
&-18 (5 v_{62}+16) u_{62}^{3} v_{62}^{2} z^{2})+9 (v_{62}^{2}+36 
v_{62} z^{2}+288 z^{2}) u_{62}^{6} v_{62}^{4} z\\
&-9 ((3 z^{2}+2
) v_{62}+408 z^{2}) u_{62}^{4} v_{62}^{3} z) \alpha v_{62}) 
v_{62})		\\
E=&((12 u_{62}^{4} v_{62}^{3} z+u_{62}^{3} v_{62}^{3}-3 u_{62}^{2} v_{62}^{2} z+6 z-2 \alpha u_{62}^{3} v_{62}^{3})^{2} u_{62} v_{62})^{-1}	\\
&\times (6z (2 \alpha u_{62}^{2} v_{62}^{2}+4 \alpha-12 u_{62}^{3} v_{62}^{2} z-u_{62}^{
2} v_{62}^{2}+3 u_{62} v_{62} z-24 u_{62} z-2) )	\\
w_{62}=& -2^{-4}3^{-2} z^{-2}(2 \alpha u_{62}^{3} v_{62}^{3}-12 u_{62}^{4} v_{62}^{3} z-u_{62}^{3} v_{62}^{3}+3 u_{62}^{2} v_{62}^{2} z-6 z)^{2} v_{62}
\end{align*}

Here, $u_{62}=0$, $u_{62}^4v_{62}^3-u_{62}^2v_{62}^2/4+u_{62}^3v_{62}^3(1-2\alpha)/(12z)+1/2=0$ and $v_{62}=0$ define $L_6$, $L_1^{(5)}$ and $L_5^{(1)}$ respectively. There are no base points in this chart.

We have thus resolved all the singularities of the system which originated from the base point $b_0$. We return to the 01 coordinate chart to resolve the other base point.

\subsection{Resolution of $b_6$}

The base point $b_6$ is from the 01-chart.

In the first chart:
\begin{align*}
u_{01}=&u_{71} v_{71}\\
v_{01}=&v_{71}\\
u=&1/(u_{71} v_{71})\\
 v=&1/u_{71}\\
\dot{u}_{71}=&-(3 v_{71} z)^{-1}(u_{71}^{2} v_{71}-2 u_{71} v_{71}+6 z+2 \alpha u_{71}^{2} v_{71})		\\
\dot{v}_{71}=&(6 u_{71} z)^{-1}(2 (2 \alpha u_{71}^{2} v_{71}+9 z-(3 v_{71} z+1) u_{71} v_{71})+(3 v_{71} z+2) u_{71}^{2} v_{71})		\\
E=&-(2 u_{71}^{3} v_{71}^{2})^{-1}((u_{71}-1) u_{71} v_{71}^{2}+2)		\\
w_{71}=&-u_{71}^{3} v_{71}^{2}
\end{align*}

Here $v_{71}=0$ and $u_{71}=0$ define the lines $L_7$ and $L_0^{(7)}$ respectively. There are no base points in this chart.

\begin{align*}
u_{01}=&u_{72} \\
v_{01}=&u_{72} v_{72}\\
u=&1/u_{72} \\
 v=&v_{72}\\
\dot{u}_{72}=&(6 z)^{-1}(2 (u_{72}+3 z)-3 (2 v_{72}-1) u_{72}^{2} z)		\\
\dot{v}_{72}=&(3 u_{72} z)^{-1}(2 (\alpha u_{72}+3 v_{72} z)-(2 v_{72}-1) u_{72})
		\\
E=&(2 u_{72}^{2})^{-1}(((v_{72}-1) u_{72}^{2}-2) v_{72})		\\
w_{72}=& -(3 v_{72})^{-1}u_{72}(2 (\alpha u_{72}+3 v_{72}^{2})-(2 v_{72}-1) u_{72}
)
\end{align*}

Here $u_{72}=0$ defines the line $L_7$, the line $L_0^{(7)}$ is not visible in this chart. There is a base point $b_7$ of both the vector field and the function $E$ given by $u_{72}=0$, $v_{72}=0$.

\subsection{Resolution of $b_7$}

\begin{align*}
u_{72}=&u_{81} v_{81} \\
v_{72}=&v_{81}\\
u=&1/(u_{81} v_{81}) \\
v=&v_{81}\\
\dot{u}_{81}=&-(6 v_{81} z)^{-1}(2 (2 \alpha u_{81}+3 z)+3 (2 v_{81}-1) u_{81}^{
2} v_{81}^{2} z-2 (3 v_{81}-1) u_{81})		\\
\dot{v}_{81}=&(3 u_{81} z)^{-1}(2 (\alpha u_{81}+3 z)-(2 v_{81}-1) u_{81})		\\
E=&(2 u_{81}^{2} v_{81})^{-1}((v_{81}-1) u_{81}^{2} v_{81}^{2}-2)		\\
w_{81}=&v_{81}^{-1}(-2 (u_{81} v_{81}^{2}-1))
\end{align*}

Here $v_{81}=0$ and $u_{81}=0$ define the lines $L_8$ and $L_7^{(1)}$ respectively. There is a base point $b_{8}'$ of the vector field given by $u_{81}=-3z(1+2 \alpha)^{-1}$, $v_{82}=0$ which is not a base point of the energy function $E$.

In the second chart:

\begin{align*}
u_{72}=&u_{82}  \\
v_{72}=&u_{82} v_{82}\\
u=&1/u_{82} \\
 v=&u_{82} v_{82}\\
\dot{u}_{82}=&-(6 z)^{-1}(6 u_{82}^{3} v_{82} z-3 u_{82}^{2} z-2 u_{82}-6 z)		\\
\dot{v}_{82}=&(6 u_{82} z)^{-1}(3 (2 u_{82}^{2} v_{82} z-u_{82} z-2) u_{82} v_{82}+2 (3 v_{82} z+1
)+4 \alpha)		\\
E=&(2 u_{82})^{-1}(((u_{82} v_{82}-1) u_{82}^{2}-2) v_{82})		\\
w_{82}=&-u_{82}
\end{align*}

Here $u_{82}=0$ defines the line $L_8$, the line $L_7^{(1)}$ is not visible in this chart. There is a base point $b_8$ of the vector field given by $u_{82}=0$, $v_{82}=-(1+2 \alpha)(3 z)^{-1}$. There is a base point $b_{8,\infty}$ of the autonomous energy function $E$ at $u_{82}=0$, $v_{82}=0$.

\subsection{Resolution of $b_8$}

In the first chart:

\begin{align*}
u_{82}=&u_{91} v_{91} \\
 v_{82}=& v_{91} -(1+2 \alpha)(3 z)^{-1}\\
 u=&1/(u_{91} v_{91}) \\
 v=&((3 v_{91} z-1) u_{91} v_{91}-2 
\alpha u_{91} v_{91})/(3 z)\\
\dot{u}_{91}=&-(18 z^{2})^{-1} u_{91}((2 (3 v_{91} z-1) u_{91} v_{91}-3 z) (6 
v_{91} z-1) u_{91}+8 (\alpha^{2} u_{91}^{2} v_{91}-3 z)\\
&-2 (2 
(9 v_{91} z-2) u_{91} v_{91}-3 z) \alpha u_{91}))	\\
\dot{v}_{91}=&-(18 u_{91} z^{2})^{-1}\\
& \times(2 (9 (u_{91} v_{91}-z) z-4 \alpha^{2} u_{91}^{3} v_{91}
^{2})-(2 (3 v_{91} z-1) u_{91} v_{91}-3 z) (3 
v_{91} z-1) u_{91}^{2} v_{91}\\
&+2 (4 (3 v_{91} z-1) u_{91} v_{91}-3 z
) \alpha u_{91}^{2} v_{91})	\\
E=&-(18 u_{91} v_{91} z^{2})^{-1}((2 \alpha u_{91} v_{91}+3 z-(3 v_{91} z-1) u_{91} v_{91}
) u_{91}^{2} v_{91}^{2}+6 z) (3 v_{91} z-1-2 \alpha)	\\
w_{91}=&-u_{91}
\end{align*}

Here $v_{91}=0$ and $u_{91}=0$ define the lines $L_9$ and $L_8^{(1)}$ respectively. There are no base points in this chart.

\begin{align*}
u_{82}=&u_{92} \\ 
v_{82}=& u_{92} v_{92} -(1+2 \alpha)(3 z)^{-1}\\
u=&1/u_{92}\\
 v=&(3z)^{-1}((3 u_{92} v_{92} z-1) u_{92}-2 \alpha u_{92})\\
\dot{u}_{92}=&-(6z)^{-1}(6 u_{92}^{4} v_{92} z-2 u_{92}^{3}-3 u_{92}^{2} z-2 u_{92}-6 z-4 \alpha u_{92}
^{3})		\\
\dot{v}_{92}=&(18 z^{2})^{-1}(3 (6 (2 u_{92} v_{92} z-1) u_{92}^{2} v_{92}-(8 v_{92}-1)) z+8 \alpha^{2} u_{92}-2 (9 v_{92} z^{2}-1) u_{92}\\
&-2 
(18 u_{92}^{2} v_{92} z-4 u_{92}-3 z) \alpha)		\\
E=&-(18 u_{92} z^{2})^{-1}((2 \alpha u_{92}+3 z-(3 u_{92} v_{92} z-1) u_{92})
 u_{92}^{2}+6 z) (3 u_{92} v_{92} z-1-2 \alpha)		\\
w_{92}=& -1
\end{align*}

Here $u_{92}=0$ defines the line $L_9$, the line $L_8^{(1)}$ is not visible in this chart. There are no base points in this chart.

The Boutroux-Painlev\'e vector field is regular and non-zero on $L_9$, furthermore the vector field is transversal to $L_9$.

\end{document}